\newcommand{\tup}[1]{\langle #1 \rangle}
\renewcommand{\phi}{\varphi}
\newcommand{\set}[1]{\left\{#1\right\}}
\newcommand{\NN}{{\mathbb{N}}}
\newcommand{\ZZ}{{\mathbb{Z}}}
\newcommand{\RR}{{\mathbb{R}}}
\newcommand{\CC}{{\mathbb{C}}}
\renewcommand{\SS}{{\mathbb{S}}}
\newcommand{\QQ}{{\mathbb{Q}}}
\renewcommand{\C}{\mathcal{C}}
\newcommand{\F}{\mathcal{F}}
\newcommand{\Orb}{\mathcal{O}}
\newcommand{\TT}{\mathbb{T}}
\newcommand{\re}{{\rm Re}}
\newcommand{\im}{{\rm Im}}
\newcommand{\diagM}{{\rm diag}}
\newcommand{\ray}{\mathsf{r}}
\newcommand{\norm}[1]{{\left\lVert#1\right\rVert}}
\newcommand{\tp}{\intercal}
\newcommand{\newextmathcommand}[2]{%
	\newcommand{#1}{\ensuremath{#2}\xspace}
}
\newextmathcommand{\KK}{\mathbbm{K}}
\newcommand{\sset}{\subseteq}
\newextmathcommand{\Inv}{\mathcal{I}}
\newextmathcommand{\theo}{\mathfrak{R}}
\newextmathcommand{\theosmaller}{\mathfrak{R}_{1}}
\newextmathcommand{\theoreals}{\mathfrak{R}_{0}}
\newextmathcommand{\theoexp}{\mathfrak{R}_{\exp}}
\newextmathcommand{\theopow}{\mathfrak{R}_{\rm pow}}
\newcommand{\vectau}{\boldsymbol\tau}
\renewcommand{\Re}{\re\,}
\renewcommand{\Im}{\im\,}
\newcommand{\lowervec}{\boldsymbol{\ell}}
\newcommand{\uppervec}{\mathbf{u}}
\newcommand{\boxvec}{{\rm Box}}
\newtheorem{remark}{Remark}
\newtheorem{claim}{Claim}
\begin{document}
\title[O-Minimal Invariants for Discrete-Time Dynamical
Systems]{O-Minimal Invariants for Discrete-Time Dynamical Systems}

\author{Shaull Almagor}
\orcid{1234-5678-9012-3456}
\affiliation{%
	\institution{Computer Science Department, Technion}
	\streetaddress{Taub Building}
	\city{Haifa}
	\postcode{3200003}
	\country{Israel}}
\email{shaull@cs.technion.ac.il}
\author{Dmitry Chistikov}
\affiliation{%
  \institution{Centre for Discrete Mathematics and its
  	Applications (DIMAP) \&
  	Department of Computer Science, University of Warwick}
  \streetaddress{}
  \city{Coventry, CV4 7AL}
  \country{United Kingdom}
}
\email{d.chistikov@warwick.ac.uk}
\author{Jo\"el Ouaknine}
\affiliation{%
 \institution{Max Planck Institute for Software Systems}
 \streetaddress{Saarland Informatics Campus}
 \city{Saarbr\"ucken}
 \country{Germany}}
\affiliation{%
	\institution{Department of Computer Science, Oxford University}
	\streetaddress{Parks Road}
	\city{Oxford}
	\postcode{OX1 3QD}
	\country{United Kingdom}}
\email{joel@mpi-sws.org}
\author{James Worrell}
\affiliation{%
	\institution{Department of Computer Science, Oxford University}
	\streetaddress{Parks Road}
	\city{Oxford}
	\postcode{OX1 3QD}
	\country{United Kingdom}}
\email{jbw@cs.ox.ac.uk}

\begin{abstract}
Termination analysis of linear loops plays a key r\^{o}le in several
areas of computer science, including program verification and abstract
interpretation.   Already for the simplest variants of linear loops the question of termination 
relates to deep open problems in number theory, such as the decidability of the Skolem and Positivity
Problems for linear recurrence sequences, or equivalently reachability
questions for discrete-time linear dynamical systems. In this paper,
we introduce the class of \emph{o-minimal invariants}, which is
broader than any previously considered, and study the decidability of
the existence and algorithmic synthesis of such invariants as
certificates of non-termination for linear loops equipped with a large
class of halting conditions. We establish two main decidability
results, one of them conditional on Schanuel's conjecture in
transcendental number theory.
\end{abstract}

\begin{CCSXML}
	<ccs2012>
	<concept>
	<concept_id>10003752.10003790.10002990</concept_id>
	<concept_desc>Theory of computation~Logic and verification</concept_desc>
	<concept_significance>500</concept_significance>
	</concept>
	<concept>
	<concept_id>10010147.10010148.10010149.10010150</concept_id>
	<concept_desc>Computing methodologies~Algebraic algorithms</concept_desc>
	<concept_significance>500</concept_significance>
	</concept>
	<concept>
	<concept_id>10002950.10003741</concept_id>
	<concept_desc>Mathematics of computing~Continuous mathematics</concept_desc>
	<concept_significance>300</concept_significance>
	</concept>
	<concept>
	<concept_id>10002950.10003741.10003746</concept_id>
	<concept_desc>Mathematics of computing~Continuous functions</concept_desc>
	<concept_significance>100</concept_significance>
	</concept>
	<concept>
	<concept_id>10003752.10003790.10002990</concept_id>
	<concept_desc>Theory of computation~Logic and verification</concept_desc>
	<concept_significance>300</concept_significance>
	</concept>
	<concept>
	<concept_id>10003752.10003790.10003799</concept_id>
	<concept_desc>Theory of computation~Finite Model Theory</concept_desc>
	<concept_significance>100</concept_significance>
	</concept>
	<concept>
	<concept_id>10011007.10011074.10011099.10011692</concept_id>
	<concept_desc>Software and its engineering~Formal software verification</concept_desc>
	<concept_significance>300</concept_significance>
	</concept>
	</ccs2012>
\end{CCSXML}

\ccsdesc[500]{Theory of computation~Logic and verification}
\ccsdesc[500]{Computing methodologies~Algebraic algorithms}
\ccsdesc[300]{Mathematics of computing~Continuous mathematics}
\ccsdesc[100]{Mathematics of computing~Continuous functions}
\ccsdesc[300]{Theory of computation~Logic and verification}
\ccsdesc[100]{Theory of computation~Finite Model Theory}
\ccsdesc[300]{Software and its engineering~Formal software verification}

%
%

\keywords{Invariants, linear loops, linear dynamical systems,
	non-termination, o-minimality}

\maketitle


\section{Introduction}
\label{sec:intro}

This paper is concerned with the existence and algorithmic
synthesis of suitable \emph{invariants} for discrete-time linear dynamical
systems. Invariants are one of the most fundamental and useful
notions in the quantitative sciences, and within computer
science play a central r\^ole in areas such as program
analysis and verification, abstract interpretation, static
analysis, and theorem proving. To this day, automated
invariant synthesis remains a topic of active research; see,
e.g., \cite{KCBR18}, and particularly Sec.~8 therein.

In program analysis, invariants are often invaluable tools enabling
one to establish various properties of interest. Our focus here is on simple
linear loops, of the following form:
\begin{equation}
	\label{loop}
	P : \ \ x \leftarrow s;\; \ \texttt{while}\ x \notin F\ \texttt{do}\ x
	\leftarrow A x \, , 
\end{equation}
where $x$ is a $d$-dimensional column vector of variables, $s$ is a $d$-dimensional
vector of integer, rational, or real numbers, $A \in \mathbb{Q}^{d \times d}$ is a square rational matrix
of dimension $d$, and $F \subseteq \mathbb{R}^d$ represents the
halting condition.

Much research has been devoted to the termination analysis of such
loops (and variants thereof); see, e.g.,~\cite{BGM12,BG14,OW15}. For
$S \subseteq \mathbb{R}^d$, we say that $P$ \emph{terminates} on $S$
if it terminates for all initial vectors $s \in S$. One of the
earliest and most famous results in this line of work is due to Kannan
and Lipton, who showed polynomial-time decidability of termination in
the case where $S$ and $F$ are both singleton vectors with rational
entries~\cite{KL80,KL86}. This work was subsequently extended to
instances in which $F$ is a low-dimensional vector
space~\cite{COW13,COW16} or a low-dimensional
polyhedron~\cite{COW15}. Still starting from a fixed initial vector,
the case in which the halting set $F$ is a hyperplane is equivalent to
the famous Skolem Problem for linear recurrence sequences, whose
decidability has been open for many decades~\cite[\S 3.9]{Tao08},
although once again positive results are known in low
dimensions~\cite{MST84,Ver85}. The case in which $F$ is a half-space
corresponds to the Positivity Problem for linear recurrence sequences,
likewise famously open in general but for which some partial results
also exist~\cite{OW14a,OW14b}.

Cases in which the starting set $S$ is infinite have also been
extensively studied, usually in conjunction with a halting set $F$
consisting of a half-space. For example, decidability of termination
for $S = \mathbb{R}^d$, $S = \mathbb{Q}^d$, and $S = \mathbb{Z}^d$ are
known~\cite{Tiw04,Bra06,OPW15,HOW19}. In the vast majority
of cases, however, termination is a hard problem
(and often undecidable~\cite{XZ10}), which has
led researchers to turn to semi-algorithms and heuristics. One of
the most popular and successful approaches to establishing termination is the use
of ranking functions, on which there is a substantial body of work;
see, e.g.,~\cite{BG14}, which includes a broad survey on the
subject.

Observe, for a loop $P$ such as that given in~(\ref{loop}), that
failure to terminate on a set $S$ corresponds to the existence of some
vector $s \in S$ from which $P$ loops forever. It is important to
note, however, that the absence of a suitable ranking function does
not necessarily entail non-termination, owing to the non-completeness
of the method. Yet surprisingly, as pointed out in~\cite{GHMRX08},
there has been significantly less research in methods seeking to
establish \emph{non-termination} than in methods aimed at proving
termination. Most existing efforts for the former have focused on the
synthesis of appropriate invariants; see,
e.g.,~\cite{CH78,CSS03,SSM04,RK04,Cou05,RK07,FOOPW17,FOOPW19,FLOOPW19}.

In order to make this notion more precise, let us associate with our
loop $P$ a \emph{discrete-time linear dynamical system} $(A,s)$. The
\emph{orbit} of this dynamical system is the set 
$\mathcal{O} = \{A^ns \mid n \geq 0\}$. It is clear that $P$ fails to terminate
from $s$ iff $\mathcal{O}$ is disjoint from $F$. A possible method to
establish the latter is therefore to exhibit a set $\mathcal{I} \subseteq
\mathbb{R}^d$ such that:
\begin{enumerate}
	\item $\mathcal{I}$ contains the initial vector $s$, i.e., $s \in \mathcal{I}$;
	\item $\mathcal{I}$ is invariant under $A$, i.e., $A \mathcal{I}
	\subseteq \mathcal{I}$; and
	\item $\mathcal{I}$ is disjoint from $F$, i.e., $\mathcal{I}
	\cap F = \emptyset$.
\end{enumerate}
Indeed, the first two conditions ensure that $\mathcal{I}$ contains
the entire orbit $\mathcal{O}$, from which the desired claim follows
thanks to the third condition.

In instances of non-termination, one notes that the orbit
$\mathcal{O}$ itself is always an invariant meeting the above
conditions. However, since in general one does not know how to
algorithmically check Condition~(3), such an invariant is of little
use. One therefore usually first fixes a suitable class of candidate sets for
which the above conditions can be mechanically verified, and within
that class, one seeks to determine if an invariant can be
found. Examples of such classes include
polyhedra~\cite{CH78}, algebraic sets~\cite{RK07}, and semi-algebraic
sets~\cite{FOOPW17}.

\subparagraph*{\textbf{Main contributions.}}
We focus on loops of the form given in~(\ref{loop}) above.  We
introduce the class of \emph{o-minimal invariants}, which, to the best
of our knowledge, is significantly broader than any of the classes
previously considered in the context of linear loops.  An o-minimal
invariant is one that is definable in some o-minimal expansion 
of the ordered field $\theoexp$ of real numbers with real
exponentiation.  We also consider two large classes of halting sets,
namely those definable over the ordered field $\theoreals$ of real
numbers (i.e., \emph{semi-algebraic sets}) and those definable
in $\theoexp$.

Given $s \in \mathbb{Q}^d$, $A \in \mathbb{Q}^{d\times d}$, and $F
\subseteq \mathbb{R}^d$, our main results are the following: if $F$ is
a semi-algebraic set, it is decidable whether there exists an
o-minimal invariant $\mathcal{I}$ containing $s$ and disjoint from
$F$, and moreover in positive instances such an invariant can be
defined explicitly in \theoreals.  For the more general case in which
$F$ is $\theoexp$-definable, assuming Schanuel's conjecture it is
decidable whether there exists an o-minimal invariant $\mathcal{I}$
containing $s$ and disjoint from $F$, and moreover in positive
instances such an invariant can be defined explicitly in \theoexp.

We illustrate below some of the key ideas from our
approach. Consider a linear dynamical system $(A,s)$ with $A\in \QQ^{3\times 3}$ whose
orbit $\Orb$ is depicted in Figure~\ref{fig:orbit}. In our example,
$\Orb$ spirals outward at some rate $\rho_1$ in the $x,y$-plane,
and increases along the $z$-axis at some rate
$\rho_2$. Intuitively, $\rho_1$ and $\rho_2$ are the moduli of the
eigenvalues of $A$.

We now consider a `normalised' version of $A$, with both moduli set
to $1$. We then connect every point on the normalised orbit with a
\emph{trajectory ray} to its corresponding point on $\Orb$, while
respecting the rates $\rho_1$ and $\rho_2$ (see
Figure~\ref{fig:rays}). One can observe that the normalised orbit is
dense in the unit circle. We prove that \emph{any} o-minimal invariant for
$(A,s)$ must in fact eventually contain every trajectory ray for
every point on the unit circle; we depict 
the union of these rays,
referred to as the \emph{trajectory cone}, in
Figure~\ref{fig:cone}. Finally, we show that any o-minimal invariant
must in fact contain some truncation of the trajectory cone from below, starting from some
height. That is, there is a uniform bound from which all the rays
must belong to the invariant. Moreover, we can now synthesise an
$\theoexp$-definable o-minimal invariant by simply adjoining a
finite number of orbit points to the truncated trajectory cone, as
depicted in Figure~\ref{fig:conetail}.

\begin{figure}[htbp]
	\begin{minipage}[b]{0.22\linewidth}
		\centering
		\includegraphics[width=\linewidth]{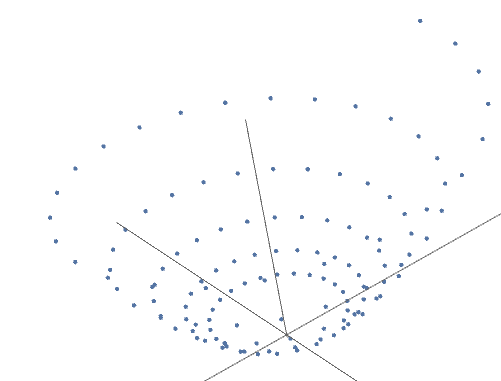}
		\caption{The orbit $\Orb$ of $(A,s)$}
		\label{fig:orbit}
	\end{minipage}
	\hspace{0.02\linewidth}
	\begin{minipage}[b]{0.22\linewidth}
		\centering
		\includegraphics[width=\linewidth]{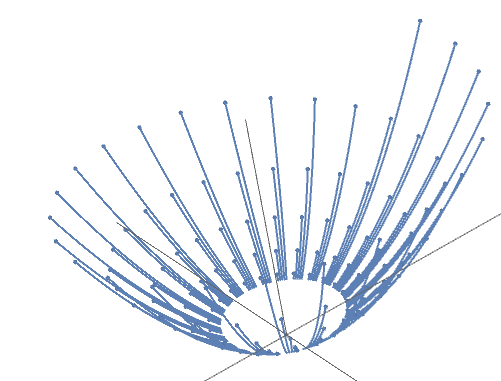}
		\caption{Trajectory rays of $\Orb$.}
		\label{fig:rays}
	\end{minipage}
	\hspace{0.02\linewidth}
	\begin{minipage}[b]{0.22\linewidth}
		\centering
		\includegraphics[width=\linewidth]{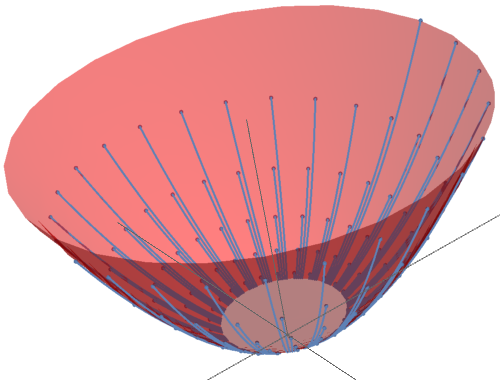}
		\caption{Trajectory cone for $\Orb$.}
		\label{fig:cone}
	\end{minipage}
	\hspace{0.02\linewidth}	
	\begin{minipage}[b]{0.22\linewidth}
		\centering
		\includegraphics[width=\linewidth]{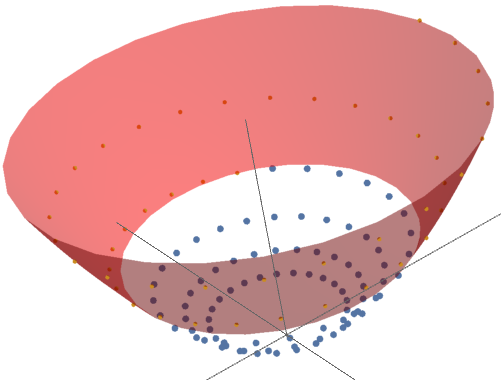}
		\caption{Invariant set for $\Orb$.}
		\label{fig:conetail}
	\end{minipage}
\end{figure}

It is worth emphasising that, whilst in general there cannot exist a smallest o-minimal
invariant, the family of truncated cones that we define plays the
r\^{o}le of a `minimal class', in the sense that \emph{any}
o-minimal invariant must necessarily contain some truncated cone. We make all of
these notions precise in the main body of the paper.

The works that are closest to ours in the literature
are~\cite{FOOPW17}, \cite{FOOPW19}, and \cite{FLOOPW19}, which
consider the same kind of loops as we do here, but restricted to the
case in which the halting set $F$ is always a rational singleton. The
authors then exhibit procedures for deciding the existence of
semi-algebraic invariants (\cite{FOOPW17,FOOPW19}) and semi-linear
invariants~\cite{FLOOPW19}. The present paper has a considerably
broader scope, in that we deal with much wider classes both of
invariants and halting sets. From a technical standpoint, the present
paper correspondingly makes heavy use of model-theoretic and
number-theoretic tools that are entirely absent from the above papers.

\section{Preliminaries and Main Definitions}
\label{sec:prelim}

We write $\theoreals$ for the structure $\tup{\RR,0,1,+,\cdot,<}$,
i.e., the ordered field of real numbers with constants $0$ and $1$.  A
sentence in the corresponding first-order language can be considered
as a quantified Boolean combination of atomic propositions of the form
$P(x_1,\ldots,x_n)>0$, where $P$ is a polynomial with integer
coefficients and $x_1,\ldots,x_n$ are variables. Tarski famously
showed that the first-order theory of $\theoreals$ admits quantifier
elimination~\cite{tarski1951decision} and is therefore decidable.  In
addition to $\theoreals$, we also consider the structure $\theoexp$,
obtained by expanding $\theoreals$ with the real exponentiation
function $x \mapsto e^x$.  It is an open question whether the theory
of the reals with exponentiation is decidable; however decidability
was established subject to Schanuel's conjecture by MacIntyre and
Wilkie~\cite{MacintyreWilkie1996}.  (Schanuel's conjecture is a
unifying conjecture in transcendental number theory that generalises
many of the classical results of that subject.)  MacIntyre and Wilkie
further showed in~\cite{MacintyreWilkie1996} that decidability of the
theory of $\theoexp$ implies a weak form of Schanuel's conjecture.

Let $\theo$ be an expansion of the structure $\theoreals$.
A set $S\subseteq \RR^d$ is \emph{definable} in $\theo$ if
there exists a formula $\phi(x_1,\ldots,x_d)$ in $\theo$ with free
variables $x_1,\ldots,x_d$ such that $S=\set{(c_1,\ldots,c_d)\in \RR^d
  \mid \theo \models \phi(c_1,\ldots,c_d)}$. A function
$f\colon B\to \RR^m$ with $B\subseteq \RR^n$ is \emph{definable} in
$\theo$ if its graph $\Gamma(f)=\set{(x,f(x)) \mid x\in B}\subseteq
\RR^{n+m}$ is an $\theo$-definable set.
For $\theo = \theoreals$, the ordered field of real numbers,
$\theoreals$-definable sets (resp.\ functions) are known as
\emph{semi-algebraic} sets (resp.\ functions).

\begin{remark}
	Our usage of the terms ``definable'' and ``semi-algebraic''
	corresponds to ``definable without parameters'' and
	``semi-algebraic without parameters'' in model theory.
\end{remark}

\begin{remark}
	\label{rmk:complex definable}
        Recall that there is a natural first-order interpretation of
        the field of complex numbers $\CC$ in the field of real
        numbers $\RR$.  We shall say that a set $S\subseteq \CC^d$ is
        \emph{$\theo$-definable} if the image $\{ (x, y) \in \RR^d
        \times \RR^d \mid x + i y \in S \}$ of $S$ under this
        interpretation is $\theo$-definable.
\end{remark}

A totally ordered structure $\tup{M,<,\ldots}$ is said to
be \emph{o-minimal} if every definable subset of $M$ is a finite union
of intervals.  Tarski's result on quantifier
elimination~\cite{tarski1951decision} implies that $\theoreals$ is
o-minimal. The o-minimality of $\theoexp$ is due to
Wilkie~\cite{Wilkie96} and holds unconditionally.  An o-minimal
expansion $\theo$ of $\theoreals$ satisfies the following useful
properties (see~\cite{dries_1998} for precise definitions and proofs).
\begin{enumerate}
	\item
	For an $\theo$-definable set $S\subseteq \RR^d$, its topological closure $\overline{S}$ is also $\theo$-definable.
	\item
	For an $\theo$-definable function $f \colon S\to \RR$, the number $\inf\set{f(x) \mid x\in S}$ is $\theo$-definable (as a singleton set).
	\item
	O-minimal structures admit \emph{cell decomposition}: every $\theo$-definable set $S\subseteq \RR^d$ can be written as a finite union of connected components called \emph{cells}. Moreover, each cell is $\theo$-definable and homeomorphic to $(0,1)^m$ for some $m \in \{0, 1, \ldots, d\}$ (where for $m=0$ we have that $(0,1)^0$ is a single point, namely $\{\vec{0}\}\subseteq \RR^d$). The \emph{dimension} of $S$ is defined as the maximal such $m$ occurring in the cell decomposition of $S$. 
	\item
	For an $\theo$-definable function $f \colon S\to \RR^m$, the dimension of its graph $\Gamma(f)$ is the same as the dimension of $S$.
\end{enumerate}

As mentioned above, $\theoreals$ is decidable thanks to its effective
quantifier elimination procedure. Equivalently, given a semi-algebraic
set, we can effectively compute its cell decomposition.
Unfortunately, few more expressive theories are known to be
unconditionally decidable.  Our decidability result in
Theorem~\ref{thm: theo exp decidable schanuel} on invariants definable
in $\theoexp$ is subject to Schanuel's conjecture; somewhat
surprisingly, however, we exhibit in Theorem~\ref{thm: qsa invariants
  against sa decidable} an unconditional decidability result.

A \emph{discrete-time linear dynamical system} (LDS) consists of a
pair $(A,s)$, where $A \in \mathbb{Q}^{d \times d}$ and $s \in
\mathbb{Q}^d$. Its \emph{orbit} $\Orb$ is the set $\set{A^n s \mid
  n\in \NN}$. An \emph{invariant} for $(A,s)$ is a set $\mathcal{I}
\subseteq \mathbb{R}^d$ that contains $s$ and is stable under
applications of $A$, i.e., $A \mathcal{I} \subseteq
\mathcal{I}$. Given a set $F \subseteq \mathbb{R}^d$, we say that the
invariant $\mathcal{I}$ \emph{avoids} $F$ if the two sets are
disjoint.  An \emph{o-minimal invariant} is one that is definable in
an o-minimal expansion of $\theoexp$.

\section{From the Orbit to Trajectory Cones and Rays}
\label{sec: traj cones}
Let $(A, s)$ be an LDS with $A\in \QQ^{d\times d}$ and $s\in
\QQ^d$. We consider the orbit $\Orb=\set{A^ns \mid n\in \NN}$.
Write $A$ in Jordan form as $A=PJP^{-1}$ where 
$P$ is an invertible matrix, and 
$J$ is a block diagonal matrix of the form $J=\diagM(B_1,\ldots,B_k)$, where for every $1\le i\le k$, $B_i\in \CC^{d_i\times d_i}$ is a Jordan block corresponding to an eigenvalue $\lambda_i$:
{
	\setlength{\arraycolsep}{2pt}
	$$B_i=\begin{pmatrix}
	\lambda_i & 1 & & \\[-6pt]
	& \ddots  & \ddots & \\[-6pt]
	&  &\ddots  & 1\\[-6pt]
	& &  &\lambda_i\\
	\end{pmatrix} \, .$$
}%
To reflect the
block structure of $J$, we often range over $\set{1,\ldots,d}$ via a
pair $(i,j)$, with $1\le i\le k$ and $1\le j\le d_i$, which denotes
the index corresponding to row $j$ in block $i$; we refer to this
notation as \emph{block-row indexing}.

	Henceforth, we assume that for all $1\le i\le k$ we have that
        $\lambda_i\neq 0$ (i.e., that the matrices $A$ and $J$ are
        invertible). Indeed, if $\lambda_i=0$, then $B_i$ is a
        nilpotent block and therefore, for the purpose of invariant
        synthesis, we can ignore finitely many points of the orbit
        under $A$ until $B^n_i$ is the $0$ block. We can then restrict
        our attention to the image of $A^n$, by identifying it with
        $\RR^{d-d_i}$.  

For all $i\in\{1,\ldots,k\}$ we can write $\lambda_i=\rho_i\xi_i$
where $\rho_i>0$ is positive real and $\xi_i$ is a complex number of
absolute value~1, with both $\rho_i$ and $\xi_i$ being algebraic.

Observe that now, for every set $F \subseteq \RR^d$, we have that $A^ns\in F$ iff $J^n s'\in P^{-1}F$ where $s'=P^{-1}s$.
For every $n>d$, $J^n=\diagM(B_1^n,\ldots,B_k^n)$ with 
$$B_i^n=\begin{pmatrix}
\lambda_i^n & \frac{n}{\lambda_i}\lambda_i^n & \cdots & \frac{{n\choose d_i-1}}{\lambda_i^{d_i-1}}\lambda_i^n\\
& \ddots &   & \vdots \\
& & &\lambda_i^n \\
\end{pmatrix} \, .$$
Every coordinate of $J^n s'$ is of the form $\lambda_i^n Q_{i,j}(n)=
(\rho_i\xi_i)^n Q_{i,j}(n)=\rho_i^n\xi_i^n Q_{i,j}(n)$ for some $1\le
i\le k$ and $1\le j\le d_i$, where $Q_{i,j}$ is a polynomial (possibly
with complex coefficients) that depends on $J$ and $s'$.  

Let $R=\diagM(\rho_1,\ldots,\rho_k)$ and
$L=\diagM(\xi_1,\ldots,\xi_k)$. We define $\TT$ to be the
subgroup of the torus in $\CC^k$ generated by the multiplicative
relations of the normalised eigenvalues
$\xi_1,\ldots,\xi_k$. That is, consider the subgroup $G=\set{v
	= (v_1, \ldots, v_k)\in \ZZ^k \mid \xi_1^{v_1}\cdots \xi_k^{v_k}=1}$ of $\ZZ^k$, and let
$$\TT=\set{(\alpha_1, \ldots, \alpha_k) \in \CC^k \mid |\alpha_i|=1
  \text{ for all $i$, and for every }v\in G,\ \alpha_1^{v_1}\cdots
  \alpha_k^{v_k}=1}.$$ A result by Masser~\cite{Mas88} allows to
compute a basis for $G$, and hence a representation of $\TT$.  Using
Kronecker's theorem on inhomogeneous simultaneous Diophantine
approximation~\cite{cassels1965introduction} it is shown in
\cite{ouaknine2014ultimate} that $\set{L^n \mid n\in \NN}$ is a dense
subset of $\set{{\rm diag}(\alpha_1,\ldots,\alpha_k) \mid
  (\alpha_1,\ldots,\alpha_k)\in \TT}$.

Thus, for every $n\in \NN$, we have
$$J^ns'\in \set{\begin{pmatrix}
	\rho_1^n p_1 Q_{1,1}(n),
	\ldots,		
	\rho_k^n p_k Q_{k,{d_k}}(n)
	\end{pmatrix}^\tp\ \mid \ (p_1,\ldots,p_{k})\in \TT} \, .$$

We now define a continuous over-approximation of the expressions
$\rho_i^n$ by replacing $n \in \mathbb{N}$ with $\log t$, where $t\geq
1$ is a real variable, so that, writing $b_i:=\log \rho_i$,
$\rho_i^n$ becomes $t^{b_i}$.  This over-approximation leads to the
following definition, which is central to our approach.
\begin{definition}
	\label{def:trajectory cone}	
	For every $t_0\ge 1$, we define the \emph{trajectory cone}\footnote{%
		These sets are, of course, not really cones. Nevertheless, if
		for all $i$ we have $b_i = 1$ and the polynomials $Q_{i,j}$ are constant,
		then the set $\C_{t_0}$ is a conical surface
		formed by the union of rays going from the origin through all points
		of $\TT$. The initial segments of the rays, of length determined
		by the parameter~$t_0$, are removed.%
	} for the orbit $\Orb$ as
	$$\C_{t_0}=\set{\begin{pmatrix}
		t^{b_1} p_1 Q_{1,1}(\log t),
		\ldots,
		t^{b_k} p_k Q_{k,d_k}(\log t)
		\end{pmatrix}^\tp\ \mid \ (p_1,\ldots,p_{d})\in \TT,\ t\ge t_0}.$$	
\end{definition}
In particular, we have that $J^n s'\in \C_1$.

In order to analyse invariants, we require a finer-grained notion than the entire trajectory cone. 
To this end, we introduce the following.
\begin{definition}
	\label{def: ray}
	For every 
	$p=(p_1,\ldots, p_k)\in \TT$ and every $t_0 \geq 1$, we define the \emph{(trajectory) ray}%
	\footnote{%
		Likewise, this set is not, strictly speaking, a straight
		half-line.%
	}
	$
	\ray(p,t_0)=\set{\begin{pmatrix}
		t^{b_1} p_1 Q_{1,1}(\log t), \ldots, 
		t^{b_k} p_k Q_{k,d_k}(\log t)
		\end{pmatrix}^\tp  \mid t\ge t_0}. 
	$
\end{definition}
Observe that we have $\C_{t_0}=\bigcup_{p\in \TT}\ray(p,t_0)$.

\begin{example}
	\label{xmp: 2 5}
	Consider the matrix $A=\diagM(5, 2)$
	and the initial point $s=(1, 1)^\tp$.
	We then have $\TT=\set{\begin{pmatrix}
		1,1
		\end{pmatrix}}$ and $\C_{t_0}=\set{\begin{pmatrix}
		t^{\log 5}, t^{\log 2}
		\end{pmatrix}^\tp \mid t\ge t_0}$. 
Observe that this is not an $\theoreals$-definable set, as the
quotient $\frac{\log 5}{\log 2}$ is not rational. This shows that even
for diagonalizable matrices (where $\C_{t_0}$ has a simple form,
devoid of the polynomials $Q_{i,j}$), $\theoreals$ might not be enough
to recover definability of the orbit (in the sense of
Theorem~\ref{thm: P traj cone is invariant} below).
\end{example}

\section{Constructing Invariants from Trajectory Cones}
\label{sec: cones are invariants}

We now proceed to show that the trajectory cones defined in
Section~\ref{sec: traj cones} can be used to characterise o-minimal
invariants.
More precisely, we show that for an LDS $(A, s)$ with $A=PJP^{-1}$,
the image under $P$ of every trajectory cone $\C_{t_0}$, augmented
with finitely many points from $\Orb$, is an invariant. Moreover, we
show that such invariants are $\theoexp$-definable, and hence
o-minimal. Complementing this, we show in Section~\ref{sec: minimal} that \emph{every}
o-minimal invariant must contain some trajectory cone.

In what
follows, let $A = PJP^{-1}$, $s$, as well as the real numbers $b_1, \ldots, b_d$ be defined
as in Section~\ref{sec: traj cones}.

\begin{theorem}
	\label{thm: P traj cone is invariant}
	For every $t_0\ge 1$, the set $P\cdot \C_{t_0}\cup \set{A^n s \mid n< \log t_0}$ is an $\theoexp$-definable invariant for the LDS $(A, s)$.
\end{theorem}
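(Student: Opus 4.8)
The plan is to verify the three defining properties of an invariant — containment of the initial point, stability under $A$, and $\theoexp$-definability — for the set $\Inv := P\C_{t_0} \cup \set{A^n x \mid n < \log_{\rho_k} t_0}$. Containment of $x$ is immediate: if $t_0 = 1$ then $x = A^0 x \in P\C_1$ since $J^0 x' = x'$ and $x' \in \ray(p, 1)$ for the appropriate choice... more carefully, $x$ is among the adjoined orbit points $\set{A^n x \mid n < \log_{\rho_k} t_0}$ whenever $\log_{\rho_k} t_0 > 0$, and if $t_0 = 1$ one checks directly that $x' \in \C_1$ (the $n = 0$ iterate lies on the cone). So the only real content is stability and definability.

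\textbf{Stability under $A$.} Since $A = PJP^{-1}$, we have $AP\C_{t_0} = PJ\C_{t_0}$, so it suffices to understand the action of $J$ on a point of $\C_{t_0}$ and on the image $J^n x'$ of an adjoined orbit point. The key computation is that applying $J$ to a generic cone point $\big(t^{b_i} p_i Q_{i,j}(\log_{\rho_k} t)\big)_{i,j}$ yields another point of the same syntactic form but with the parameter $t$ replaced by $\rho_k t$: indeed each block acts by multiplication by $\rho_i \lambda_i$ together with the shift coming from the superdiagonal $1$'s, and since $\rho_i^{n} = (\rho_k^n)^{b_i} = t^{b_i}$, multiplying by $\rho_i$ turns $t^{b_i}$ into $(\rho_k t)^{b_i}$ while the $\lambda_i$ factor is absorbed into the $\TT$-coordinate $p_i$ (here one uses that $\TT$ is a group closed under coordinatewise multiplication by $(\lambda_1, \ldots, \lambda_k)$ — this follows because $\set{L^n}$ lies in the closure of $\TT$ and $\TT$ is defined by the relation group $G$). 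The polynomial parts $Q_{i,j}$ transform among themselves within each block exactly as the entries of $J^{n+1} x'$ relate to those of $J^n x'$, i.e. via $Q_{i,j}(m+1) = Q_{i,j}(m) + (\text{shifted lower terms})$, which is again a reindexing that the definition of $\C$ already incorporates via the variable $\log_{\rho_k} t$. Hence $J\C_{t_0} \subseteq \C_{\rho_k t_0} \subseteq \C_{t_0}$ (as $\rho_k > 1$ and $\C$ is decreasing in its parameter). For the adjoined points: $A$ maps $A^n x$ to $A^{n+1} x$; if $n + 1 < \log_{\rho_k} t_0$ this is again an adjoined point, and if $n + 1 \ge \log_{\rho_k} t_0$ (so $\rho_k^{n+1} \ge t_0$) then $J^{n+1} x' \in \C_{t_0}$ by taking $t = \rho_k^{n+1}$ and $p = L^{n+1} \in \TT$ in Definition~\ref{def:trajectory cone}, so $A^{n+1} x \in P\C_{t_0}$. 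Thus $A\Inv \subseteq \Inv$.

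\textbf{$\theoexp$-definability.} The adjoined set is finite, hence definable. For $P\C_{t_0}$: the coordinates involve $t^{b_i} = \exp(b_i \ln t)$, the polynomials $Q_{i,j}$ evaluated at $\log_{\rho_k} t = \ln t / \ln \rho_k$, and the torus component $(p_1, \ldots, p_k) \in \TT$. The constants $b_i$, the coefficients of $Q_{i,j}$, the entries of $P$, and $\ln \rho_k$ are all real algebraic or otherwise $\theoexp$-definable constants (using Remark~\ref{rmk:complex definable} to handle the complex-valued $\lambda_i$, $Q_{i,j}$ via their real and imaginary parts); $\TT$ is cut out by finitely many monomial equations $\alpha_1^{v_1} \cdots \alpha_k^{v_k} = 1$ for a generating set of $G$ together with $|\alpha_i| = 1$, which is a semi-algebraic condition; and $\exp$ itself is in the signature of $\theoexp$. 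Existentially quantifying over $t \ge t_0$ and over $(p_1, \ldots, p_k)$ then gives an $\theoexp$-formula defining $P\C_{t_0}$, so $\Inv$ is $\theoexp$-definable. Since $\theoexp$ is o-minimal (Wilkie), $\Inv$ is an o-minimal invariant.

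\textbf{Main obstacle.} The delicate point is the stability computation: one must check carefully that applying $J$ (i.e. multiplying each block by $\rho_i \lambda_i$ and performing the Jordan shift) really does send a point parametrized by $(t, p)$ to one parametrized by $(\rho_k t, L p)$, with the polynomial coordinates $Q_{i,j}$ matching up under the substitution $\log_{\rho_k} t \mapsto \log_{\rho_k}(\rho_k t) = \log_{\rho_k} t + 1$. This amounts to verifying that the $Q_{i,j}$'s — which encode the binomial-coefficient entries of $B_i^n$ acting on $x'$ — satisfy the same recurrence in $n$ whether we think of $n$ as a natural number or as the continuous quantity $\log_{\rho_k} t$; it is a finite, purely algebraic verification, but it is where all the bookkeeping lives, and it is essential that the group $\TT$ be closed under multiplication by $L = \diagM(\lambda_1, \ldots, \lambda_k)$, which one must justify from the definition of $\TT$ via the relation lattice $G$.
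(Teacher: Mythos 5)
Your decomposition into stability, containment, and definability is the right one, and the stability argument — $J$ sends the point parametrized by $(t,p)$ to the point parametrized by $(\rho_k t, Lp)$, so $J\C_{t_0} \subseteq \C_{\rho_k t_0} \subseteq \C_{t_0}$, and orbit points that fall off the finite tail land in $P\C_{t_0}$ — matches Lemmas~\ref{lem:J on ray} and \ref{lem: traj cones are invariant} in the paper. Your sketch of $\theoexp$-definability via an existential formula in $t$ and $(p_1,\ldots,p_k)$ is also along the right lines.

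However, there is a genuine gap: you never establish that $P\C_{t_0} \subseteq \RR^d$, which is Lemma~\ref{lem: P ray is real} in the paper and is a non-trivial ingredient. Recall that an invariant for $(A,x)$ is by definition a subset of $\RR^d$. The cone $\C_{t_0}$ lives in $\CC^d$, the change-of-basis matrix $P$ generally has complex entries (its columns are generalized eigenvectors for complex eigenvalues), and so the formula you write down, using Remark~\ref{rmk:complex definable} to handle the complex parts, defines a subset of $\RR^{2d}$, not of $\RR^d$. Your statement that "the entries of $P$ ... are all real algebraic or otherwise $\theoexp$-definable constants" glosses over exactly this: the entries of $P$ are algebraic but not real in general. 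To turn your $\RR^{2d}$-definable set into an invariant in $\RR^d$ you need to prove that the imaginary part of every coordinate of $P y$ vanishes for all $y \in \C_{t_0}$, not only for the orbit points $y = J^n x'$ (where it is clear, since $P J^n P^{-1} x = A^n x \in \RR^d$). The cone strictly over-approximates the orbit, so realness of $P\Orb$ does not automatically propagate. The paper proves this by observing that each coordinate $\text{Im}(M(t)_i)$, viewed as a function of $t$, is an exponential-polynomial expression that vanishes at the infinitely many points $t = \rho_k^n$ and must therefore vanish identically; this analytic argument (or a tedious direct computation using the conjugate-pair structure of $P$) is the missing step in your proof.
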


The intuition behind Theorem~\ref{thm: P traj cone is invariant} is as
follows.  Clearly, the orbit $\Orb$ itself is always an invariant for
$(A, s)$. However, it is generally not definable in any o-minimal
structure (in particular, since it has infinitely many connected
components).  In order to recover definability in $\theoexp$ while
maintaining stability under $A$, the invariants constructed in
Theorem~\ref{thm: P traj cone is invariant} over-approximate the orbit
by the image of the trajectory cone $\C_{t_0}$ under the linear
transformation $P$. Finally, a finite set of points from $\Orb$ is
added to this image of the trajectory cone, to fill in the missing
points in case $t_0$ is too large.

The proof of Theorem~\ref{thm: P traj cone is invariant} has several parts.
First, recall that the trajectory cone itself, $\C_{t_0}$, is an
over-approximation of the set $\set{J^n P^{-1}s \mid n\in \NN}$. As such,
clearly $\C_{t_0} \subseteq \CC^d$. In comparison, the orbit can be
written as $\Orb=\set{P J^n P^{-1}s \mid n\in \NN}\subseteq \RR^d$. We
prove in 
Section~\ref{subsec: P ray is real}
the following simple lemma,
from which it follows that the entire set $P\cdot \C_{t_0}$ is also a subset of $\RR^d$.

\begin{lemma}
	\label{lem: P ray is real}
	For every $p\in \TT$ and $t_0\ge 1$, we have $P\cdot \ray(p,t_0)\subseteq \RR^d$.
\end{lemma}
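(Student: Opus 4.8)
The plan is to prove, coordinate by coordinate, that every point of $P\ray(p,t_0)$ is fixed by complex conjugation, exploiting that $A$ — and hence $J$, up to a permutation of its Jordan blocks — has real (indeed rational) entries. A typical point of $\ray(p,t_0)$ has block-row coordinates $t^{b_i}\,p_i\,Q_{i,j}(\log_{\rho_k}t)$ ($t\ge t_0$), so its image under $P$ has $m$-th coordinate
\[
  c_m(t,p)\ :=\ \sum_{(i,j)}P_{m,(i,j)}\;t^{b_i}\;p_i\;Q_{i,j}(\log_{\rho_k}t),
\]
and it suffices to show $c_m(t,p)=\overline{c_m(t,p)}$ for every $m$, every $t\ge t_0\ge 1$ and every $p\in\TT$. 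Because $A$ is real, its non-real eigenvalues occur in conjugate pairs with identical Jordan structure, giving an involution $\sigma$ of $\{1,\dots,k\}$ with $\lambda_{\sigma(i)}=\overline{\lambda_i}$, $\rho_{\sigma(i)}=\rho_i$ (so $b_{\sigma(i)}=b_i$) and $d_{\sigma(i)}=d_i$, with blocks carrying real eigenvalues paired with themselves. Conjugating a Jordan chain for $\rho_i\lambda_i$ produces a Jordan chain for $\rho_i\overline{\lambda_i}$, so by the standard theory of real canonical forms we may take $P$ so that $\overline{P}=P\Pi$, where $\Pi$ is the permutation matrix of the induced involution $\pi(i,j)=(\sigma(i),j)$ on block-row indices (a change of Jordan basis by a matrix commuting with $J$ leaves $P\C_{t_0}$ unchanged, being absorbed into the polynomials $Q_{i,j}$); equivalently $\overline{P_{m,(i,j)}}=P_{m,(\sigma(i),j)}$.

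Next I collect the remaining conjugation symmetries. From $\overline{P}=P\Pi$ one gets $\overline{x'}=\overline{P^{-1}x}=(\overline{P})^{-1}x=\Pi^{-1}x'$, that is, $\overline{x'_{(i,j)}}=x'_{(\sigma(i),j)}$. Recalling that the polynomial arising from $J^n$ is $Q_{i,j}(z)=\sum_{j'\ge j}\binom{z}{\,j'-j\,}(\rho_i\lambda_i)^{-(j'-j)}x'_{(i,j')}$, with rational (hence real) binomial coefficients, and using $\overline{\rho_i\lambda_i}=\rho_i\overline{\lambda_i}=\rho_{\sigma(i)}\lambda_{\sigma(i)}$ together with $d_i=d_{\sigma(i)}$, we obtain the polynomial identity $\overline{Q_{i,j}}=Q_{\sigma(i),j}$; in particular $\overline{Q_{i,j}(s)}=Q_{\sigma(i),j}(s)$ for every real $s$. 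Finally, for $p\in\TT$ I claim $\overline{p_i}=p_{\sigma(i)}$: indeed $\overline{(L^n)_i}=\overline{\lambda_i^n}=\lambda_{\sigma(i)}^n=(L^n)_{\sigma(i)}$, so the continuous maps $p\mapsto\overline{p}$ and $p\mapsto(p_{\sigma(i)})_i$ on $\CC^k$ agree on $\{L^n\mid n\in\NN\}$, which by the quoted consequence of Kronecker's theorem is dense in $\TT$; hence they agree on all of $\TT$.

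Putting these together, for $t\ge t_0\ge 1$ we have $t^{b_i}\in\RR$ and $\log_{\rho_k}t\in\RR$, so
\[
  \overline{c_m(t,p)}=\sum_{(i,j)}\overline{P_{m,(i,j)}}\;t^{b_i}\;\overline{p_i}\;\overline{Q_{i,j}(\log_{\rho_k}t)}
  =\sum_{(i,j)}P_{m,(\sigma(i),j)}\;t^{b_i}\;p_{\sigma(i)}\;Q_{\sigma(i),j}(\log_{\rho_k}t).
\]
Re-indexing the sum by $i'=\sigma(i)$ — a bijection of $\{1,\dots,k\}$ under which $b_i=b_{\sigma(i')}=b_{i'}$ and the inner range $j\in\{1,\dots,d_i\}$ becomes $j\in\{1,\dots,d_{i'}\}$ — returns the right-hand side to $\sum_{(i',j)}P_{m,(i',j)}\,t^{b_{i'}}\,p_{i'}\,Q_{i',j}(\log_{\rho_k}t)=c_m(t,p)$. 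Thus $c_m(t,p)\in\RR$ for all $m$, and $P\ray(p,t_0)\subseteq\RR^d$.

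The main difficulty here is organisational rather than conceptual: one must fix the pairing $\sigma$ of conjugate eigenvalue-blocks, the induced involution $\pi$ on block-row indices, and a conjugation-adapted Jordan basis $P$ all consistently, and then verify that the three symmetries — of $P$, of the $Q_{i,j}$, and of the torus coordinates — hold with respect to this single $\sigma$. The only non-elementary ingredient is the density of $\{L^n\}$ in $\TT$ furnished by Kronecker's theorem, which is exactly what forces $\overline{p}$ to be a coordinate permutation of $p$ and makes the symmetry close up.
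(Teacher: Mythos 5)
Your proof is correct, and it is precisely the ``structural'' argument that the paper deliberately declines to carry out. Just before stating the lemma, the authors remark that one could prove it ``by analysing the structure of the matrices involved\ldots\ and using the facts that the columns of $P$ are generalised eigenvectors of $A$, and that conjugate pairs of eigenvalues correspond to conjugate pairs of generalised eigenvectors,'' but that ``a formal proof of this involves fairly tedious calculations,'' and they instead give an analytic argument in the appendix. Their proof fixes $p$ and $t_0$, writes $M(t) = P\diagM(t^{b_1}p_1 C_1,\ldots,t^{b_k}p_k C_k)P^{-1}x$, observes that at the values $t=\rho_k^n$ one recovers $A^n x \in \RR^d$, and then argues that an exponential-polynomial function $|\Im M(t)_i|$ with unbounded zero set must vanish identically. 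Your proof, by contrast, makes the conjugation symmetry explicit: the single involution $\sigma$ on block indices simultaneously governs $\overline{P}$, $\overline{x'}$, the polynomials $Q_{i,j}$, and (via density of $\{L^n\}$ and continuity) the torus coordinates $p$, and the conjugation-invariance of $c_m(t,p)$ drops out of re-indexing a finite sum. Each approach has something to recommend it: the paper's is short once one accepts the exponential-polynomial zero-counting step (which itself needs some care to state precisely), while yours is purely algebraic, needing no analysis beyond the already-available density of $\{L^n\}$ in $\TT$, and it makes transparent \emph{why} the image is real. Two points in your write-up deserve a word of emphasis. First, the parenthetical claim that replacing $P$ by $PM$ with $M$ in the commutant of $J$ leaves $P\ray(p,t_0)$ unchanged is true but not immediate: one checks that $M$ is block-diagonal, that $M_i B_i^n M_i^{-1}=B_i^n$, and hence that $\sum_{j'}M_{(i,j),(i,j')}Q''_{i,j'}=Q_{i,j}$ as polynomials, so the rays, not just their images under $P$, transform consistently; this normalisation to a conjugation-adapted $P$ is a genuine step and should be recorded, not merely asserted. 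Second, for blocks with real eigenvalue $\lambda_i\in\{\pm1\}$ you have $\sigma(i)=i$, and your density argument correctly forces $p_i\in\RR$ (indeed $p_i=1$ if $\lambda_i=1$, and $p_i\in\{\pm1\}$ if $\lambda_i=-1$), so the ``self-paired'' case is covered without special treatment — worth a one-line remark to reassure the reader.
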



In the second part of the proof of Theorem~\ref{thm: P traj cone is invariant},
we show that $P\cdot \C_{t_0}$ is stable under $A$.
The key ingredient is the following lemma, which characterises the action of $J$ on rays and is
proved in Section~\ref{subsec: J on ray}.

\begin{lemma}
	\label{lem:J on ray}
	For every $p=(p_1,\ldots,p_k)\in \TT$ and $t_0 \ge 1$, we have $J \cdot \ray(p,t_0)=\ray(L p,\rho_k t_0)$.
\end{lemma}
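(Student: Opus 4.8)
The plan is to compute $J\,\ray(p,t_0)$ coordinate by coordinate, using the block-row indexing, and show that each point of $J\,\ray(p,t_0)$ is exactly a point of $\ray(Lp,\rho_k t_0)$ obtained under the reparametrisation $t \mapsto \rho_k t$. First I would recall from Section~\ref{sec: traj cones} that a generic point of $\ray(p,t_0)$ has $(i,j)$-th coordinate (in block-row indexing) of the form $t^{b_i} p_i Q_{i,j}(\log_{\rho_k} t)$, where $Q_{i,j}$ is the polynomial arising from $J^n x'$ in the $(i,j)$ slot. Since $J = \diagM(B_1,\ldots,B_k)$ is block-diagonal, the action of $J$ on a vector only mixes coordinates within the same block $i$: the map $B_i$ sends the tuple $(y_{i,1},\ldots,y_{i,d_i})$ to $(\rho_i\lambda_i\, y_{i,1} + y_{i,2},\ \rho_i\lambda_i\, y_{i,2} + y_{i,3},\ \ldots,\ \rho_i\lambda_i\, y_{i,d_i})$.

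The heart of the argument is the following polynomial identity, which I would verify directly from the definition of the $Q_{i,j}$ via the explicit formula for $B_i^n$ given in Section~\ref{sec: traj cones}: writing $m = \log_{\rho_k} t$ so that $\rho_k\, t$ corresponds to $m+1$, one has, for each block $i$ and each row $j$,
\begin{equation*}
\rho_i\lambda_i \cdot Q_{i,j}(m) + Q_{i,j+1}(m) \;=\; \rho_i\lambda_i \cdot Q_{i,j}(m+1)\,,
\end{equation*}
with the convention $Q_{i,d_i+1} \equiv 0$. This is precisely the statement that the $Q_{i,j}$ are the entries produced by the powers $B_i^n$ acting on the relevant slice of $x'$, i.e.\ that $Q_{i,j}(n+1)$ agrees with the $(i,j)$-entry of $B_i^{n+1} x'$ rewritten as $B_i\cdot(B_i^n x')$; equivalently it follows from the closed form $B_i^n$, since the entries there are $\binom{n}{\ell}(\rho_i\lambda_i)^{n-\ell}$ up to fixed constants, and Pascal's rule $\binom{n+1}{\ell} = \binom{n}{\ell} + \binom{n}{\ell-1}$ gives exactly the recurrence above. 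Granting this identity, the $(i,j)$-coordinate of $J$ applied to the point of $\ray(p,t_0)$ at parameter $t$ becomes
\begin{equation*}
\rho_i\lambda_i\bigl(t^{b_i} p_i Q_{i,j}(m)\bigr) + t^{b_i} p_i Q_{i,j+1}(m)
\;=\; t^{b_i}\,(\lambda_i p_i)\,\rho_i\, Q_{i,j}(m+1)\,.
\end{equation*}
Now, since $\rho_k\, t$ corresponds to parameter $m+1$ and $(\rho_k t)^{b_i} = \rho_i \cdot t^{b_i}$ (because $b_i = \log_{\rho_k}\rho_i$, so $\rho_k^{b_i} = \rho_i$), the right-hand side equals $(\rho_k t)^{b_i}\,(\lambda_i p_i)\, Q_{i,j}\bigl(\log_{\rho_k}(\rho_k t)\bigr)$, which is exactly the $(i,j)$-coordinate of the point of $\ray(Lp, \rho_k t_0)$ at parameter $\rho_k t$, noting that $Lp = (\lambda_1 p_1,\ldots,\lambda_k p_k)$ and that $Lp \in \TT$ since $\TT$ is a group containing $L$ (the $\lambda_i$ themselves trivially satisfy every relation in $G$). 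As $t$ ranges over $[t_0,\infty)$, the parameter $\rho_k t$ ranges over $[\rho_k t_0,\infty)$, giving the two-sided inclusion, and hence the claimed equality $J\,\ray(p,t_0) = \ray(Lp,\rho_k t_0)$.

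The main obstacle is establishing the polynomial recurrence $\rho_i\lambda_i Q_{i,j}(m) + Q_{i,j+1}(m) = \rho_i\lambda_i Q_{i,j}(m+1)$ cleanly; the bookkeeping is mildly delicate because the $Q_{i,j}$ absorb both the binomial coefficients from $B_i^n$ and the relevant entries of $x'$, and one must be careful that the factor $(\rho_i\lambda_i)^n$ has been stripped off uniformly across a block (as is done in Section~\ref{sec: traj cones}) so that the residual polynomials line up under the shift $m\mapsto m+1$. Once this is set up correctly, the identity is just Pascal's rule, and the rest is the routine substitution above. One should also double-check the degenerate case $d_i = 1$ (no $Q_{i,j+1}$ term), where the statement reduces to the scalar computation $\rho_i\lambda_i\cdot t^{b_i} p_i Q_{i,1}(m) = (\rho_k t)^{b_i}(\lambda_i p_i)Q_{i,1}(m+1)$, which holds because $Q_{i,1}$ is then the constant $x'_{i,1}$ and $\rho_i t^{b_i} = (\rho_k t)^{b_i}$.
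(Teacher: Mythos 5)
Your proposal is correct and follows essentially the same route as the paper's proof: both come down to Pascal's identity for the binomial coefficients inside the polynomials $Q_{i,j}$ together with the observation that $(\rho_k t)^{b_i} = \rho_i\, t^{b_i}$ and that $\log_{\rho_k}(\rho_k t) = \log_{\rho_k} t + 1$. The paper carries out the Pascal step by comparing coefficients of $x'_{i,s}$ one $s$ at a time, whereas you package exactly the same computation as the clean recurrence $\rho_i\lambda_i\, Q_{i,j}(m) + Q_{i,j+1}(m) = \rho_i\lambda_i\, Q_{i,j}(m+1)$ (with $Q_{i,d_i+1}\equiv 0$), which is correct and a slightly tidier formulation of the identical argument.
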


The next lemma then lifts Lemma~\ref{lem:J on ray} to the entire trajectory cone.

\begin{lemma}
	\label{lem: traj cones are invariant}
	For every $t_0\ge 1$, we have $J\cdot \C_{t_0}\subseteq \C_{t_0}$.
\end{lemma}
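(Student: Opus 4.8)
The plan is to deduce Lemma~\ref{lem: traj cones are invariant} directly from Lemma~\ref{lem:J on ray} together with the observation, recorded just after Definition~\ref{def: ray}, that $\C_{t_0}=\bigcup_{p\in\TT}\ray(p,t_0)$. Since $J$ acts linearly (it is a fixed matrix), applying $J$ to a union of sets commutes with taking the union, so
\[
J\,\C_{t_0} = J\Bigl(\bigcup_{p\in\TT}\ray(p,t_0)\Bigr) = \bigcup_{p\in\TT} J\,\ray(p,t_0) = \bigcup_{p\in\TT} \ray(Lp,\,\rho_k t_0),
\]
where the last equality is exactly Lemma~\ref{lem:J on ray}.

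The next step is to massage the right-hand side back into the form of a trajectory cone. First I would note that $L=\diagM(\lambda_1,\ldots,\lambda_k)$ maps $\TT$ bijectively onto itself: indeed $\TT$ is a subgroup of the $k$-torus, each $\lambda_i$ has modulus~$1$, and by definition of $\TT$ (as the set of points satisfying every multiplicative relation $v\in G$ of the $\lambda_i$) we have $L\in\TT$; since $\TT$ is closed under the group operation and under inverses, multiplication by $L$ is a bijection of $\TT$. Hence as $p$ ranges over $\TT$, so does $Lp$, and therefore $\bigcup_{p\in\TT}\ray(Lp,\rho_k t_0)=\bigcup_{q\in\TT}\ray(q,\rho_k t_0)=\C_{\rho_k t_0}$. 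Combining, $J\,\C_{t_0}=\C_{\rho_k t_0}$.

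It remains to observe that $\C_{\rho_k t_0}\subseteq\C_{t_0}$. This is immediate from Definition~\ref{def:trajectory cone}: since we are in the case $\rho_k>1$ we have $\rho_k t_0\ge t_0$, so every pair $(p,t)$ with $t\ge\rho_k t_0$ also satisfies $t\ge t_0$, and thus the defining set for $\C_{\rho_k t_0}$ is a subset of the one for $\C_{t_0}$. (In the symmetric case $\rho_k<1$ one argues \emph{mutatis mutandis}, using that the parameter scaling goes the other way but the truncation from $t_0$ still only shrinks the set.) Chaining the three displayed relations yields $J\,\C_{t_0}=\C_{\rho_k t_0}\subseteq\C_{t_0}$, which is the claim.

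There is essentially no obstacle here once Lemma~\ref{lem:J on ray} is in hand: the only point requiring a moment's care is the claim that $p\mapsto Lp$ permutes $\TT$, which rests on $L\in\TT$ and the group structure of $\TT$; everything else is a one-line set-theoretic manipulation and a trivial comparison of index ranges. The genuine content of the invariance statement has been pushed into Lemma~\ref{lem:J on ray}, whose proof is deferred to Section~\ref{subsec: J on ray}.
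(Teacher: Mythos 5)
Your proof is correct and follows essentially the same route as the paper: decompose $\C_{t_0}$ into rays, apply Lemma~\ref{lem:J on ray} to each ray, use that $L$ permutes $\TT$, and conclude via $\rho_k t_0 \ge t_0$. The only cosmetic difference is that you first reindex to obtain the intermediate equality $J\C_{t_0}=\C_{\rho_k t_0}$ and then compare cones, whereas the paper shrinks each ray first ($\ray(Lp,\rho_k t_0)\subseteq\ray(Lp,t_0)$) and reindexes afterwards; you also supply the brief justification, omitted in the paper, that $(\lambda_1,\ldots,\lambda_k)\in\TT$ and the group structure make $p\mapsto Lp$ a bijection of $\TT$.
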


\begin{proof}
	Recall that $\C_{t_0}=\bigcup_{p\in \TT}\ray(p,t_0)$. By
	Lemma~\ref{lem:J on ray} we have that
	$J\cdot \C_{t_0}=\bigcup_{p\in \TT}\ray(Lp,e t_0)$.  But $e t_0 > t_0$ and
	$L p\in \TT$ iff
	$p\in \TT$. Hence we have that
	$\ray(Lp,et_0) \subseteq \ray(Lp,t_0)$, from which we
	conclude that
	$J\cdot \C_{t_0}\subseteq \bigcup_{p\in \TT}\ray(Lp,t_0)=\bigcup_{p\in
		\TT}\ray(p,t_0)=\C_{t_0}$.
\end{proof}

The proof of Theorem~\ref{thm: P traj cone is invariant}
combines all these ingredients together and is given in
subsection~\ref{subsec: proof that P traj cone is invariant}.

\subsection{Proof of Lemma~\ref{lem: P ray is real}}
\label{subsec: P ray is real}

Recall that we have $A=PJP^{-1}$, where $J=\diagM(B_1,\ldots,B_k)$ is
a block diagonal matrix with $B_i \in \CC^{d_i\times d_i}$ the Jordan
block corresponding to eigenvalue $\lambda_i$.  Write
$P=\begin{pmatrix} P_1 & \ldots & P_k \end{pmatrix}$, where $P_i \in
\CC^{d\times d_i}$ for all $i\in \{1,\ldots,k\}$.  Since the
generalized eigenspaces of $A$ respectively corresponding to pairs of
complex-conjugate eigenvalues are themselves element-wise conjugate,
we can partition the set $\{1,\ldots,k\}$ into singletons and pairs of
the form $\{i_1,i_2\}$ such that $d_{i_1}=d_{i_2}$,
$P_{i_1}=\overline{P_{i_2}}$, and $B_{i_1}=\overline{B_{i_2}}$.  In
this case we say that $i_1$ and $i_2$ are \emph{conjugate block
	indices}.

By definition, for conjugate block indices $i_1,i_2$ we have that for
all $j\in\{1,\ldots,d_{i_1}\}$ the column of $P$ with block-column
index $(i_1,j)$ is conjugate to the column of $P$ with block-column
index $(i_2,j)$.  Likewise the row of $P^{-1}$ with block-row index
$(i_1,j)$ is conjugate to that with index $(i_2,j)$.\footnote{Since
	$\overline{P}=PS$ for $S$ the permutation matrix that interchanges
	conjugate blocks, we have $\overline{P^{-1}}=S^{-1}P^{-1}$.}  In
particular, for the vector $s'=P^{-1}s$ we have that the entries
$s'_{i_1,j}$ and $s'_{i_2,j}$ are complex conjugates.

Let $p\in \TT$ and $t_0\ge 1$.
Consider the vector $v:=\begin{pmatrix}
t^{b_1} p_1 Q_{1,1}(\log t)\\
\vdots\\
t^{b_k} p_k Q_{k,d_k}(\log t)\\
\end{pmatrix}\in \ray(p,t_0)$.
Fix two conjugate block indices $i_1,i_2 \in \{1,\ldots,k\}$.  We
claim that for all $j\in\{1,\ldots,d_{i_1}$ the entries of $v$ with
respective block-row indices $(i_1,j)$ and $(i_2,j)$, namely
$t^{b_1} p_{i_1} Q_{i_1,j}(\log t)$ and 
$t^{b_2} p_{i_2} Q_{i_2,j}(\log t)$, are mutually conjugate.

Towards proving the claim,
observe that for every $1\le i\le k$ and $1\le j\le d_i$ we have
$$Q_{i,j}(\log t)=\sum_{m=0}^{d_i-j}\frac{{\log t \choose
		m}}{\lambda_i^m} \cdot s'_{i,j+m}$$ with $(i,j+m)$ being a
block-row index.\footnote{Here, for $s\in \RR$ and $m\in \NN$, one
	defines ${s\choose m}=\frac{1}{m!}\prod_{i=0}^{m-1}(s-i)$, which
	maintains consistency with the original definition of $Q_{i,j}$ in
	Section~\ref{sec: traj cones}.}  It follows that the values
$Q_{i_1,j}(\log t)$ and $Q_{i_2,j}(\log t)$ are complex conjugates.
Note moreover that we have $p_{i_1}p_{i_2}=1$ since $p \in \TT$ and
$\xi_{i_1}\xi_{i_2}=1$.  Thus $p_{i_1}$ and $p_{i_2}$ are also complex
conjugates.  Finally we note that $b_{i_1}=\log |\lambda_{i_1}|=\log
|\lambda_{i_2}|= b_{i_2}$ and hence $t^{b_{i_1}}=t^{b_{i_2}}$.  The
claim follows.

Given the above claim and the fact that for conjugate block indices
$i_1$ and $i_2$, for all $j$ the respective rows of $P$ with indices
$(i_1,j)$ and $(i_2,j)$ are element-wise conjugate, we conclude that
$Pv\in\mathbb{R}$.  This concludes the proof.\qed

\subsection{Proof of Lemma~\ref{lem:J on ray}}
\label{subsec: J on ray}

Let $y=\begin{pmatrix}
t^{b_1} p_1 Q_{1,1}(\log t)\\
\vdots\\
t^{b_k} p_k Q_{k,d_k}(\log t)\\
\end{pmatrix}\in \ray(p,t_0)$. We claim that $Jy=\begin{pmatrix}
(e t)^{b_1} \xi_1 p_1 Q_{1,1}(\log (e t))\\
\vdots\\
(e t)^{b_k} \xi_k p_k Q_{k,d_k}(\log (e t))\\
\end{pmatrix}$. Note that since $Lp=(\xi_1
p_1,\ldots,\xi_k p_k)$, the above suffices to conclude the proof.

Consider a coordinate $m=(i,j)$ of $Jy$ in block-row index, with $j<d_i$. 
The case of $j=d_i$ is similar and simpler.  
To simplify notation, we write $\xi,\rho,$ and $d$ instead of $\xi_i,\rho_i,$ and $d_i$, respectively. 
Then we have 
$$(Jy)_m=\xi \rho t^{b_i}p_iQ_{i,j}(\log t)+
t^{b_i}p_iQ_{i,j+1}(\log t) \, .$$
Recall that\footnote{Here, for $w\in \RR$ and $m\in \NN$, one defines
	${w\choose m}=\frac{1}{m!}\prod_{i=0}^{m-1}(w-i)$, which maintains
	consistency with the original definition of $Q_{i,j}$ in Section~\ref{sec: traj
		cones}.} 
$$Q_{i,j}(\log t)=\sum_{c=0}^{d-j}\frac{{\log t
		\choose c}}{(\rho\xi)^c}s'_{i,j+c} \, ,$$ 
with $(i,j+c)$ in block-row index. 
We can then write 
\begin{equation}
	\label{eq: Jy}
	(Jy)_m=\xi \rho
	t^{b_i}p_i\sum_{c=0}^{d-j}\frac{{\log t \choose
			c}}{(\rho\xi)^c}s'_{i,j+c}+
	t^{b_i}p_i\sum_{c=0}^{d-j-1}\frac{{\log t \choose
			c}}{(\rho\xi)^c}s'_{i,j+c+1} \, .
\end{equation}
We now compare this to coordinate $m$ of our claim, namely 
\begin{equation}
	\label{eq: target Jy}
	(e t)^{b_i} \xi p_i Q_{i,j}(\log (e t))=
	(e t)^{b_i} \xi p_i
	\sum_{c=0}^{d-j}\frac{{\log (e t) \choose
			c}}{(\rho\xi)^c}s'_{i,j+c} \, .
\end{equation}
We compare the right-hand sides of Equations~\eqref{eq: Jy} and~\eqref{eq: target Jy} by comparing the coefficients of $s'_{i,q}$ for $q\in \set{j,\ldots,d}$ (these being the only ones that appear in the expressions).
For $q=j$ we see that in~\eqref{eq: Jy} the number $s'_{i,j}$ occurs in the first summand only, and its coefficient is thus $\xi\rho t^{b_i}p_i$, while in~\eqref{eq: target Jy} it is $(e t)^{b_i}\xi p_i=e^{b_i} t^{b_i}\xi p_i=\rho t^{b_i}\xi p_i$, since $b_i=\log \rho$. Thus, the coefficients are equal.

For $q>j$, write $q=j+c$ with $c\ge 1$; the coefficient at
$s'_{i,j+c}$ in~\eqref{eq: Jy} is then
$$\xi \rho t^{b_i}p_i \frac{{\log t \choose c}}{(\rho\xi)^c}+ t^{b_i}p_i \frac{{\log t \choose {c-1}}}{(\rho\xi)^{c-1}}
=\frac{t^{b_i}\rho \xi p_i}{(\rho\xi)^c}\left({\log t \choose c}+{\log t \choose {c-1}}\right)
=\frac{t^{b_i}\xi\rho p_i}{\xi^c}{\log t+1 \choose c}$$
where the last equality follows from a continuous version of Pascal's identity.
Finally, by noticing that $\log t+1=\log (et)$, it is easy to see that this is the same coefficient as in~\eqref{eq: target Jy}.

\subsection{Proof of Theorem~\ref{thm: P traj cone is invariant}}
\label{subsec: proof that P traj cone is invariant}

Let $t_0\ge 1$. 
By applying Lemma~\ref{lem: P ray is real} to every $p\in \TT$, we
conclude that $P\cdot \C_{t_0}\subseteq \RR^d$. It is then easy to see that
$P\cdot \C_{t_0}$ is definable in $\theoexp$ (note that the only reason the
set $\C_{t_0}$ might fail to be $\theoexp$-definable is that the underlying domain should be $\RR$ and not $\CC$). 

Next, by Lemma~\ref{lem: traj cones are invariant} we have that $J\cdot \C_{t_0}\subseteq \C_{t_0}$. Applying $P$ from the left, we get $PJ\cdot \C_{t_0}\subseteq P\cdot \C_{t_0}$. Thus, we have $AP\cdot \C_{t_0}=PJP^{-1}P\cdot \C_{t_0}=PJ\cdot \C_{t_0}\subseteq P\cdot \C_{t_0}$.

Finally, observe that $\set{A^n s \mid n\ge \log t_0}\subseteq
P\cdot \C_{t_0}$. Since any finite subset of $\Orb$ can be described in
$\theoreals$, we conclude that the set $\set{A^n s \mid n< \log t_0}\cup P\cdot \C_{t_0}$ is an $\theoexp$-definable invariant for $(A, s)$.

\section{O-Minimal Invariants Must Contain Trajectory Cones}
\label{sec: minimal}
In this section we consider invariants definable in o-minimal extensions of \theoexp. Fix such an extension $\theo$ for the remainder of this section.

\begin{theorem}
	\label{thm:inv contain quasi cone}
	Consider an $\theo$-definable invariant $\Inv$ for the LDS $(A, s)$. Then there exists $t_0 \ge 1$ such that $P\cdot \C_{t_0}\subseteq \Inv$.
\end{theorem}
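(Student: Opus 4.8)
The plan is to work in the Jordan basis, so that the invariant $\Inv$ becomes $\Inv' = P^{-1}\Inv$, an $\theo$-definable set with $J\Inv' \subseteq \Inv'$ containing the point $x' = P^{-1}x$, and the goal becomes $\C_{t_0} \subseteq \Inv'$ for some $t_0$ (since $\theo$ extends $\theoexp$, which can express the linear map $P^{-1}$, definability is preserved, modulo the real/complex identification of Remark~\ref{rmk:complex definable}). The orbit $\{J^n x' \mid n \in \NN\}$ lies in $\Inv'$ by invariance, so the closure $\overline{\Inv'}$ (still $\theo$-definable by property~1 of o-minimal theories) contains the closure of the orbit. The heart of the matter is to show that this closure already contains an entire trajectory cone $\C_{t_0}$, and then argue that the cone itself --- not merely its closure --- lies inside $\Inv'$.

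\textbf{Key steps.}
First I would take closures: let $S = \overline{\Inv'}$, an $\theo$-definable superset of $\overline{\Orb'}$ where $\Orb' = \{J^n x'\}$. The crucial density input is that $\{L^n \mid n \in \NN\}$ is dense in $\{\diagM(\alpha) : \alpha \in \TT\}$; combined with the explicit formula for $J^n x'$, whose $(i,j)$-coordinate is $\rho_i^n \lambda_i^n Q_{i,j}(n)$, one sees that for each fixed large $n$ the orbit point is close to points of the form $(\rho_1^n p_1 Q_{1,1}(n), \ldots, \rho_k^n p_k Q_{k,d_k}(n))$ for $p \in \TT$. Letting $n \to \infty$ along suitable subsequences and using the reparametrisation $t = \rho_k^n$ (with $\rho_i^n = t^{b_i}$, $n = \log_{\rho_k} t$) should show that every point of the trajectory cone $\C_1$ is a limit point of $\Orb'$, hence lies in $S$. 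The subtlety here is that $t$ ranges over a discrete set $\{\rho_k^n\}$ while $\C_1$ wants continuous $t \ge 1$; but by taking $n$ large and exploiting that $n$ appears only polynomially (through $Q_{i,j}$ and through $n = \log_{\rho_k} t$) one can interpolate --- for any target $t$ and any $\varepsilon$, pick $n$ with $\rho_k^n$ dense enough and $L^n$ close enough to the desired $p$, using that one of the eigenvalue moduli being $>1$ makes $\{\rho_k^n\}$ unbounded with controllable gaps relative to the polynomial factors. So $\C_1 \subseteq \overline{\Orb'} \subseteq S = \overline{\Inv'}$.

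\textbf{From closure to the cone itself.}
The remaining, and I expect hardest, step is to pass from $\C_{t_0} \subseteq \overline{\Inv'}$ to $\C_{t_0'} \subseteq \Inv'$ for some possibly larger $t_0'$. Here o-minimality does the work: $\Inv'$ and its frontier $\partial\Inv' = \overline{\Inv'} \setminus \Inv'$ are both $\theo$-definable, and by cell decomposition $\dim(\partial\Inv') < \dim(\overline{\Inv'})$. Meanwhile $\C_1$ is the $\theo$-definable image of $\TT \times [1,\infty)$ under a definable map, and $\TT$ itself is a definable torus of dimension equal to $k$ minus the rank of the relation lattice $G$; one should check that $\dim \C_1 = \dim \overline{\Orb'} = \dim \overline{\Inv'}$ restricted to the relevant cells --- more precisely, that $\C_1$ cannot be swallowed entirely by the lower-dimensional frontier. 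Since a definable set of dimension $m$ cannot be contained in a definable set of dimension $< m$, and since any portion $\C_{t_0}$ of the cone still has full dimension, $\C_1 \cap \Inv' $ must be ``large'': its complement within $\C_1$, namely $\C_1 \cap \partial\Inv'$, is definable of dimension $< \dim \C_1$. A fibrewise argument over the rays --- each ray $\ray(p,t_0)$ is one-dimensional, and the set of $t$ for which $\ray(p,\cdot)(t) \notin \Inv'$ is, for each $p$, a definable subset of $\RR$, hence a finite union of points and intervals --- should let me conclude, using definable choice / uniform finiteness in o-minimal structures, that there is a single threshold $t_0$ beyond which every ray lies in $\Inv'$, giving $\C_{t_0} \subseteq \Inv'$ and hence $P\C_{t_0} \subseteq \Inv$. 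The main obstacle is making the ``uniform truncation height'' genuinely uniform over all $p \in \TT$ simultaneously; this is exactly where one needs an o-minimal definability/compactness argument (definable families of subsets of $\RR$ have uniformly bounded numbers of connected components) rather than a pointwise one.
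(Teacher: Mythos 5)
Your proposal breaks down at the closure step, which is the load-bearing part of your argument. You assert that every point of $\C_1$ is a limit point of the orbit $\Orb' = \{J^n x'\}$, so that $\C_1 \subseteq \overline{\Orb'} \subseteq \overline{\Inv'}$. This is false. Under the running assumption $\rho_k > 1$, the factor $\rho_k^n$ grows unboundedly and the gaps $\rho_k^{n+1} - \rho_k^n = \rho_k^n(\rho_k - 1)$ diverge, so the set $\{\rho_k^n : n \in \NN\}$ is nowhere near dense in $[1,\infty)$. Consequently the orbit is a discrete set escaping to infinity, and its closure cannot contain the continuous one-parameter family of points making up even a single ray, let alone the cone. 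The simplest counterexample is $d=1$, $A=(2)$, $x=1$: here $\Orb' = \{2^n\}$, $\C_1 = [1,\infty)$, and $\overline{\Orb'} = \Orb'$ does not contain $\C_1$. The density of $\{L^n\}$ in $\TT$ that you invoke controls only the \emph{angular} part of the orbit, not the radial parameter $t$; you flag this as a ``subtlety'' to be handled by interpolation, but there is no interpolation to be had when the radial samples are exponentially spread out, regardless of the polynomial factors $Q_{i,j}$. Because of this, the remainder of your argument --- in particular passing from closure to $\Inv'$ itself via dimension of the frontier --- has nothing to stand on, and is also itself incomplete: a set $\partial\Inv' \cap \C_1$ of dimension $<\dim\C_1$ can still contain entire rays, and a definable subset of $\RR$ being a finite union of points and intervals does not by itself give you that a ray is \emph{eventually} inside $\Inv'$ (an unbounded complementary interval is still only one component).

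The correct argument, and the one the paper uses, never goes through closures of the orbit. It uses the invariance hypothesis $A\Inv \subseteq \Inv$ in a more structural way. First, o-minimality is applied to the one-dimensional set $P\ray(p,1) \cap \Inv$ to get, for each fixed $p \in \TT$, a dichotomy: some tail of the ray is either entirely inside or entirely outside $\Inv$. Second, the case ``entirely outside'' is ruled out by a contradiction that uses invariance dynamically: since $J$ maps $\ray(p,t)$ to $\ray(Lp,\rho_k t)$ and $A^{-1}\Inv \supseteq \Inv$, if one ray were eventually outside $\Inv$, so would all rays indexed by $L^{-n}p$; these indices are dense in $\TT$, and an auxiliary o-minimality lemma (two definable subsets of $\TT$ whose closures both equal $\TT$ must intersect) forces the ``outside'' set to be all of $\TT$, contradicting $\Orb \subseteq \Inv$. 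Third, uniformity of the threshold $t_0$ over $p \in \TT$ is obtained by considering the definable function $f(p) = \inf\{t : P\ray(p,t) \subseteq \Inv\}$, noting the inequality $f(Lp) \le \rho_k f(p)$ (again from invariance and the action of $J$ on rays), and covering $\TT$ by finitely many $L^n$-translates of a compact piece on which $f$ is continuous. Your ``definable choice / uniform finiteness'' heuristic is pointing at the right kind of o-minimal tool for the last step, but it cannot substitute for the inequality $f(Lp) \le \rho_k f(p)$, which is what converts a bound on one compact piece into a bound on all of $\TT$. The essential missing idea throughout your writeup is that one must propagate information between rays using $J$ and the invariance of $\Inv$, rather than approximating the cone by orbit points.
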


To prove Theorem~\ref{thm:inv contain quasi cone}, we begin by making
following claims of increasing strength:
\begin{claim}
	\label{clm:ev-in-or-out}
	For every $p \in \TT$
	there exists $t_0 \ge 1$ such that $P\cdot \ray(p,t_0)\subseteq \Inv$ or $P\cdot \ray(p,t_0)\cap \Inv=\emptyset$.
\end{claim}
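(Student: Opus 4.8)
The plan is to use the density of $\{L^n : n \in \NN\}$ in the normalised torus $\TT$ together with the o-minimality of $\theo$ to get a dichotomy, for each fixed direction $p$, between "the ray is eventually inside $\Inv$" and "the ray is eventually outside $\Inv$". First I would fix $p \in \TT$ and consider the one-parameter set $P\ray(p,1) \subseteq \RR^d$; this is $\theo$-definable (its defining formula comes from Definition~\ref{def: ray}, with the parameter $t$ ranging over $[1,\infty)$, and it lives in $\RR^d$ by Lemma~\ref{lem: P ray is real}). Intersecting with $\Inv$, the set $S_p := \{\, t \ge 1 : P\cdot(\text{ray point at parameter }t) \in \Inv \,\} \subseteq \RR$ is $\theo$-definable, hence by o-minimality it is a finite union of points and intervals. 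Therefore either $S_p$ contains an unbounded interval $[t_0,\infty)$ — in which case $P\ray(p,t_0) \subseteq \Inv$ — or the complement of $S_p$ contains an unbounded interval $[t_0,\infty)$, in which case $P\ray(p,t_0) \cap \Inv = \emptyset$. Exactly one of these two situations must occur (both can occur only if the ray is constant, which is a degenerate case that still falls under the first alternative). This immediately gives the claim, once we know that $S_p$ really is $\theo$-definable; the only subtlety there is that the parametrisation $t \mapsto t^{b_i}$ and $t \mapsto \log_{\rho_k} t$ are $\theoexp$-definable (powers with real exponents and logarithms are definable from $\exp$), so the graph of the map $t \mapsto$ ray point is definable in $\theoexp$, hence in $\theo$, and the preimage of the $\theo$-definable set $\Inv$ under a $\theo$-definable map is $\theo$-definable.

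The main obstacle I anticipate is not in the dichotomy itself but in making sure the parametrisation underlying $\ray(p,t_0)$ is genuinely $\theo$-definable when $p$ is an arbitrary point of $\TT$ rather than one arising from some $L^n$. Here $p = (p_1,\ldots,p_k)$ is a tuple of complex numbers of modulus $1$, viewed in $\RR^{2k}$ via Remark~\ref{rmk:complex definable}; for a single fixed $p$ the coordinates $p_1,\ldots,p_k$ are just real constants (parameters of the formula), so this causes no definability problem — the claim is about each fixed $p$ separately. The functions $t\mapsto t^{b_i}$ for real $b_i$ and $t\mapsto \binom{\log_{\rho_k}t}{c}$ are all $\theoexp$-definable, and multiplication by the fixed complex numbers $p_i$ and by the polynomial coefficients of $Q_{i,j}$ is just real-arithmetic on the real and imaginary parts. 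So the per-$p$ definability goes through cleanly.

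One should also note explicitly why the two alternatives are exhaustive: since $S_p \subseteq \RR_{\ge 1}$ is a finite union of points and intervals, it has a "tail behaviour" — there is some $T$ beyond which $S_p$ agrees with either all of $[T,\infty)$ or none of it (the finitely many breakpoints all lie below some $T$, and the last cell is either an unbounded interval contained in $S_p$ or an unbounded interval disjoint from $S_p$). Taking $t_0 = T$ yields the claim. This is the standard use of o-minimality to rule out infinitely-oscillating membership, and it is exactly the feature that will, in the subsequent claims, let us upgrade from "eventually in or eventually out, for each $p$" to "eventually in, for all $p$ simultaneously" by exploiting density of $\{L^n\}$ in $\TT$ and continuity/definability arguments.
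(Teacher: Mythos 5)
Your proof is correct and follows essentially the same strategy as the paper: both exploit o-minimality (finitely many connected components) to extract the tail dichotomy for the intersection of the ray with $\Inv$. The only cosmetic difference is that you pull back to the parameter space $[1,\infty)\subseteq\RR$ and invoke the definition of o-minimality for subsets of $\RR$ directly, whereas the paper applies cell decomposition to the one-dimensional set $P\ray(p,1)\cap\Inv\subseteq\RR^d$; the two amount to the same thing.
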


\begin{claim}
	\label{clm:ev-in}
	For every $p \in \TT$
	there exists $t_0 \ge 1$ such that $P\cdot \ray(p,t_0)\subseteq \Inv$.
\end{claim}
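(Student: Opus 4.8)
The plan is to combine Claim~\ref{clm:ev-in-or-out} with a dichotomy argument that rules out the second alternative, using the fact that the normalised orbit $\set{L^n \mid n \in \NN}$ is dense in $\set{\diagM(\alpha_1,\ldots,\alpha_k) \mid (\alpha_1,\ldots,\alpha_k)\in\TT}$, together with the o-minimality of $\theo$. First I would fix $p\in\TT$ and let $t_0\ge 1$ be as furnished by Claim~\ref{clm:ev-in-or-out}. Suppose, for contradiction, that $P\ray(p,t_0)\cap\Inv=\emptyset$. The idea is that, since $p$ is a limit of normalised orbit points, the actual orbit $\Orb$ comes arbitrarily close to $P\ray(p,t_0)$ infinitely often (with the radial/$t$-parameter tending to infinity), yet the orbit is contained in $\Inv$; so $\Inv$ must accumulate onto $P\ray(p,t_0)$ from outside. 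I would then use Lemma~\ref{lem:J on ray} (and its consequence that $\ray(p,t_0)$ and $\ray(Lp,\rho_kt_0)$ are related by $J$, hence by $A$ after applying $P$): the set of parameters $p'\in\TT$ for which the ``in'' alternative holds is $L$-invariant, as is the set for which the ``out'' alternative holds, since applying $A$ to an invariant-contained ray keeps it in $\Inv$ and applying $A$ to a ray disjoint from $\Inv$ (eventually) keeps it disjoint.

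The key step, which I expect to be the technical heart of the argument, is to promote ``$\Inv$ accumulates onto $P\ray(p,t_0)$'' into a contradiction with either o-minimality or with $\Inv$ being the whole picture. Here is the mechanism I would use. Consider the set of $p'\in\TT$ such that $P\ray(p',t_0')\subseteq\Inv$ for some $t_0'$; call it $\TT_{\mathrm{in}}$, and let $\TT_{\mathrm{out}}=\TT\setminus\TT_{\mathrm{in}}$ (well-defined by Claim~\ref{clm:ev-in-or-out}). Both are $L$-invariant. Since $\Orb\subseteq\Inv$ and normalised orbit points are dense in $\TT$, and each orbit point $A^nx$ lies on (a point of) the ray through $L^n(\text{something})$, one sees $\TT_{\mathrm{in}}$ is dense in $\TT$. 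On the other hand, I would argue $\TT_{\mathrm{out}}$ is open in $\TT$: if $P\ray(p',t_0')$ stays a bounded distance away from $\Inv$ on some tail, then by continuity of the ray parametrisation (the maps $t^{b_i}Q_{i,j}(\log_{\rho_k}t)$ and the $P$-image are continuous, and definable in $\theo$) and compactness of $\TT$, nearby rays $P\ray(p'',t_0')$ also avoid $\Inv$ on a tail — but this needs care because ``bounded distance away'' is not automatic from disjointness. This is precisely where I anticipate the main obstacle: upgrading disjointness of $P\ray(p,t_0)$ from $\Inv$ to a quantitative separation on a tail. I would obtain this using o-minimality: the function $t\mapsto d(P\ray(p)(t),\Inv)$ is $\theo$-definable, hence (by o-minimality, monotonicity near $+\infty$) either eventually $0$ or eventually bounded below by a positive constant; disjointness rules out the former on the relevant tail, giving the separation.

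Once $\TT_{\mathrm{out}}$ is shown to be open (and, being $L$-invariant, its closure is $L$-invariant too), density of $\TT_{\mathrm{in}}$ forces $\TT_{\mathrm{out}}=\emptyset$, i.e.\ every $p'\in\TT$ satisfies the ``in'' alternative. In particular $p$ does, contradicting our assumption — and in fact proving the claim for all $p$ simultaneously. I would finish by checking the $L$-invariance bookkeeping carefully: applying $A=PJP^{-1}$ to $P\ray(p',t_0')$ gives $P\ray(Lp',\rho_kt_0')\subseteq\ray(Lp',t_0')$ via Lemma~\ref{lem:J on ray} and $\rho_k>1$, so membership of $Lp'$ in $\TT_{\mathrm{in}}$ follows from that of $p'$; and conversely, since $A$ restricted to $P\C_{t_0}$ is a bijection onto a sub-cone by the structure of Lemma~\ref{lem:J on ray}, a ray that is eventually \emph{outside} $\Inv$ maps to one that is eventually outside $\Inv$, giving $L$-invariance of $\TT_{\mathrm{out}}$. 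The one subtlety to keep an eye on is that the $t_0'$ witnessing membership may depend on $p'$; this is harmless for the openness/density topological argument, and the uniform $t_0$ of the theorem statement (as opposed to this claim) will be extracted afterwards by a separate compactness argument, not needed here.
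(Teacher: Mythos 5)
Your overall strategy---use the dichotomy of Claim~\ref{clm:ev-in-or-out}, push rays around with $L^{\pm 1}$ using Lemma~\ref{lem:J on ray}, exploit density of $\{L^n\}$ in $\TT$ and o-minimality to rule out the ``out'' branch---is the same as the paper's. However, the two steps on which your argument pivots do not hold up as stated, and neither is repaired by the ideas you sketch.

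\textbf{Density of $\TT_{\mathrm{in}}$ is not established.} You argue that $\TT_{\mathrm{in}}$ is dense because $\Orb\subseteq\Inv$ and the orbit points sit on rays through $L^n p_0$. But the orbit point $A^n x$ corresponds to $p=L^n p_0$ \emph{and} $t=\rho_k^n$; for each such $p$ there is exactly \emph{one} orbit point on $\ray(p,\cdot)$. A single point of a ray lying in $\Inv$ says nothing about a tail of that ray lying in $\Inv$. In fact, the paper never establishes density of the ``in'' set and does not need to. Its argument runs the other way: assume some ray is ``out'' at a fixed time $t_0$, iterate $A^{-1}$ (i.e.\ $L^{-1}$) to get a dense set of $p$'s that are out \emph{at the same} $t_0$ (this crucially uses $\rho_k>1$, so that shrinking $t_0$ only enlarges the ray), and form $S'=\{s\in\TT: P\ray(s,t_0)\cap\Inv=\emptyset\}$. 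Working at a fixed $t_0$ is exactly what makes $S'$ nicely $\theo$-definable and dense. Then the paper shows $S'=\TT$ by contradiction: if $q\notin S'$, the forward iterates $L^n q$ produce a dense subset of the complement $F'$, and Lemma~\ref{lem:closure} (two disjoint $\theo$-definable sets cannot both be dense in $\TT$; this is the o-minimality input) yields a contradiction. Finally $S'=\TT$ says $P\C_{t_0}\cap\Inv=\emptyset$, but $P\C_{t_0}$ contains infinitely many orbit points (it is the union over \emph{all} $p$, not a single ray), contradicting $\Orb\subseteq\Inv$. So the orbit is used only at the very end, against the \emph{whole cone} rather than against a single ray.

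\textbf{Openness of $\TT_{\mathrm{out}}$ via quantitative separation fails.} You acknowledge this is the technical heart, and propose to deduce from o-minimality that $t\mapsto d(P\ray(p)(t),\Inv)$ is ``either eventually $0$ or eventually bounded below by a positive constant.'' That dichotomy is false: o-minimality only gives eventual monotonicity of a definable real function, and a definable, eventually monotone, strictly positive function can tend to $0$ (e.g.\ $1/t$). So disjointness of a tail of the ray from $\Inv$ does not give a uniform positive gap, and the continuity/compactness argument for openness of $\TT_{\mathrm{out}}$ does not go through. The paper sidesteps openness entirely: Lemma~\ref{lem:closure} uses only that a $\theo$-definable set dense in $\TT$ has non-empty \emph{interior} in $\TT$ (because $\dim(X)=\dim(\overline X)$ and a definable subset of the same dimension has interior), which is weaker and is provable, whereas openness of one side of the dichotomy is exactly what the claim is trying to establish and should not be assumed en route.

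In short, the topological shape of your argument is right, but you need the paper's device of fixing a single $t_0$ so that $L^{-1}$ preserves the ``out at $t_0$'' set, and you need the interior-of-dense-definable-sets lemma (Lemma~\ref{lem:closure}) in place of the unsupported openness and density claims.
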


\begin{claim}
	\label{clm:in-uniform}
	There exists $t_0 \ge 1$
	such that for every $p\in \TT$
	we have
	$P\cdot \ray(p,t_0)\subseteq \Inv$. 
\end{claim}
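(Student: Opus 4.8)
The plan is to bootstrap from Claim~\ref{clm:ev-in} (each ray is eventually contained in $\Inv$) to a \emph{uniform} threshold $t_0$ that works for all $p \in \TT$ simultaneously. First I would define, for each $p \in \TT$, the quantity $\tau(p) = \inf\set{t_0 \ge 1 : P\ray(p,t_0) \subseteq \Inv}$; Claim~\ref{clm:ev-in} guarantees this infimum is taken over a nonempty set, so $\tau(p)$ is a well-defined real number (possibly $1$). The goal is then to show $\sup_{p \in \TT} \tau(p) < \infty$, since any $t_0$ strictly above this supremum would witness the claim (modulo checking that the "eventually contained" property is upward-closed in $t_0$, which is immediate from Definition~\ref{def: ray}: $\ray(p,t_0') \subseteq \ray(p,t_0)$ when $t_0' \ge t_0$).

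The key idea is to realise $\tau$ as an $\theo$-definable function on an $\theo$-definable set, and then invoke o-minimality to bound it. Concretely: $\TT$ is a semi-algebraic (hence $\theo$-definable) subset of $\CC^k \cong \RR^{2k}$, being cut out by finitely many monomial equations coming from generators of the relation lattice $G$ together with the modulus-one constraints. The relation "$(p,t_0) \mapsto P\ray(p,t_0) \subseteq \Inv$" is $\theo$-definable because $P\ray(p,t_0)$ is given by an explicit $\theoexp$-definable (in fact $\theo$-definable) parametrisation in $p$ and $t$, and $\Inv$ is $\theo$-definable; so $\set{(p,t_0) : P\ray(p,t_0) \subseteq \Inv}$ is $\theo$-definable, and by property~2 of o-minimal theories listed in the preliminaries, the pointwise infimum $\tau(p)$ is an $\theo$-definable function $\TT \to \RR$. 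Now $\tau$ is a definable real-valued function on the bounded set $\TT$ (bounded since all coordinates have modulus $1$). Consider its image $\tau(\TT) \subseteq \RR$: this is an $\theo$-definable subset of $\RR$, hence by o-minimality a finite union of points and intervals. If it were unbounded above, it would contain an unbounded interval, and I would need to derive a contradiction — this is where the argument has to be handled carefully.

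The main obstacle, and the step I expect to require the most care, is ruling out that $\tau$ is unbounded. O-minimality alone does not forbid a definable function on a bounded domain from being unbounded (e.g. $x \mapsto 1/x$ on $(0,1)$), so the mere $\theo$-definability of $\tau$ is not enough; I need to use that $\TT$ is not just bounded but \emph{closed} — indeed $\TT$ is a compact subset of $\RR^{2k}$, being defined by non-strict polynomial (in)equalities. A definable function on a compact definable set need not be continuous, but by cell decomposition (property~3) $\TT$ partitions into finitely many cells on each of which $\tau$ is continuous; on the closure of each such cell, using o-minimality one argues $\tau$ is bounded — the potential blow-up can only occur as one approaches the boundary of a cell, but that boundary lies inside $\TT$ (by compactness), and one can use Claim~\ref{clm:ev-in} applied at the limit point together with the fact that $\ray(p,t_0)$ depends continuously on $p$ to show $\tau$ cannot escape to $+\infty$ near any limit point. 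Piecing together finitely many cells gives a global bound $T := \sup_{p}\tau(p) + 1 < \infty$, and then taking $t_0 = T$ finishes the proof. An alternative, possibly cleaner route is to avoid the infimum function entirely: show directly that $\set{t_0 \ge 1 : \forall p \in \TT,\ P\ray(p,t_0) \subseteq \Inv}$ is $\theo$-definable and nonempty by a compactness-and-definability argument, using that $\TT$ is compact and each ray is eventually absorbed, so that some finite "cover" argument (in the definable category) produces a single working threshold; either way, the crux is leveraging compactness of $\TT$ against the definable-choice-of-threshold, and that is the delicate point to get right.
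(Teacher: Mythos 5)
Your setup --- defining $\tau(p) = \inf\set{t : P\ray(p,t) \subseteq \Inv}$, establishing that it is $\theo$-definable (this is exactly the paper's function $f$), and recognising that the crux is to bound $\tau$ on $\TT$ --- matches the paper's opening moves. The gap is in the step you yourself flag as delicate. Your argument for boundedness is that $\TT$ is compact, $\tau$ is continuous on each cell of a finite cell decomposition, and then ``one can use Claim~\ref{clm:ev-in} applied at the limit point together with the fact that $\ray(p,t_0)$ depends continuously on $p$'' to rule out a blow-up near a cell boundary. This does not follow. A definable function on a compact definable set need not be bounded: $g(x) = 1/x$ for $x \in (0,1]$ extended by $g(0)=0$ is semi-algebraic on $[0,1]$ but unbounded. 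Knowing $\tau(p_0) < \infty$ at a boundary point $p_0$ gives some $t^*$ with $P\ray(p_0,t^*) \subseteq \Inv$, but $\Inv$ is an arbitrary $\theo$-definable set (not necessarily open), so continuity of $p \mapsto \ray(p,t^*)$ does nothing to force nearby rays $P\ray(p_n,t^*)$ into $\Inv$; $\tau$ may be discontinuous at $p_0$ and still blow up along $p_n \to p_0$. Your ``alternative route'' at the end suffers from the same problem: there is no local uniformity to feed into a compactness argument.

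The paper avoids the boundary issue entirely by bringing in the dynamics. Cell decomposition is used only to get a \emph{single} open set $K \subseteq \TT$ on which $\tau$ is continuous, together with nested $K'' \subseteq K' \subseteq K$ with $K''$ open and $K'$ closed, so $\tau$ is bounded on $K'$. The key extra ingredient is the inequality $\tau(Lp) \le \rho_k \cdot \tau(p)$, which comes from Lemma~\ref{lem:J on ray} and the invariance $A\Inv \subseteq \Inv$. Then density of $\set{L^n : n \in \NN}$ in $\TT$ makes $\set{L^n K''}$ an open cover of the compact set $\TT$, hence finitely many translates $L^{n_1}K', \ldots, L^{n_m}K'$ cover $\TT$, and on each the inequality gives $\max_{L^{n_i}K'} \tau \le \rho_k^{n_i}\max_{K'}\tau$. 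The group action $L$ propagates the bound from the one good piece to all of $\TT$, with no need to understand $\tau$ near cell boundaries at all. This dynamical step is what your proof is missing.
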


\begin{proof}[Proof of Claim~\ref{clm:ev-in-or-out}]
	Fix $p \in \TT$.  Then the set
\[ \{ t \geq 0: P (t^{b_1}p_1Q_{1,1}(\log t),\ldots,t^{b_k}p_kQ_{k,d_k}(\log t))^\top \in \Inv\}\]
is $\theo$-definable and hence comprises a finite union of intervals.
If this set contains an unbounded interval then there exists $t_0$
such that $P\cdot \ray(p,t_0)\subseteq \Inv$; otherwise there exists
$t_0$ such that $P\cdot \ray(p,t_0)\cap \Inv=\emptyset$.
\end{proof}

Before proceeding to Claim~\ref{clm:ev-in}, we prove an auxiliary lemma, which is an adaptation of a similar result from~\cite{FOOPW17}.
For a set $X$, we write $\overline X$ to refer to the topological closure of $X$.
We use the usual topology on $\RR^n$, $\CC^n$, and the (usual) subspace topology
on their subsets.

\begin{lemma}
	\label{lem:closure}
	Let $S, F \subseteq \TT$ be $\theo$-definable\footnote{Recall that, in order to reason about $\TT\subseteq \CC^{k}$ in $\theo$, we identify $\CC$ with $\RR^2$.} sets such that $\overline{S} = \overline{F} = \TT$.
	Then $F \cap S \ne \emptyset$.
\end{lemma}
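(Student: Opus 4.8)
The plan is to reduce the statement to basic dimension theory in o-minimal structures --- in particular, to the fact that the topological frontier of a nonempty definable set has strictly smaller dimension than the set itself. The density hypotheses $\overline S = \overline F = \TT$ will be used only to pin down the dimensions of $S$ and $F$, and the conclusion will then fall out of a dimension count on $\TT \setminus (S \cap F)$.

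First I would record that $\TT$ is semi-algebraic, hence $\theo$-definable, and nonempty (it contains $(1,\ldots,1)$). The subgroup $G \sset \ZZ^k$ is finitely generated, say by $v^{(1)},\ldots,v^{(\ell)}$; and for $\alpha$ with $|\alpha_i| = 1$ for all $i$ the assignment $v \mapsto \alpha_1^{v_1}\cdots\alpha_k^{v_k}$ is a homomorphism from $\ZZ^k$ into the circle group (using $\alpha_i^{-1} = \conj{\alpha_i}$), so its kernel is a subgroup and the condition defining $\TT$ is equivalent to the finitely many equations $\alpha_1^{v^{(j)}_1}\cdots\alpha_k^{v^{(j)}_k} = 1$, $1 \le j \le \ell$. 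Identifying $\CC^k$ with $\RR^{2k}$, these become polynomial equations in the real and imaginary parts (splitting each into positive and negative exponents), so $\TT$ is semi-algebraic; it is also closed and bounded, hence compact, so closures of subsets of $\TT$ computed in $\TT$ and in $\RR^{2k}$ coincide and the hypotheses are unambiguous. Put $d_0 := \dim \TT$. Note also that $S \neq \emptyset$ and $F \neq \emptyset$, since otherwise their closures would be empty rather than $\TT$.

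Next I would invoke the standard o-minimal dimension facts (see \cite{dries_1998}): for every nonempty $\theo$-definable set $X$ one has $\dim \overline X = \dim X$ and $\dim(\overline X \setminus X) < \dim X$. Applying this to $S$ with $\overline S = \TT$ gives $\dim S = d_0$ and $\dim(\TT \setminus S) = \dim(\overline S \setminus S) < d_0$, and symmetrically $\dim(\TT \setminus F) < d_0$. The set $\TT \setminus (S \cap F) = (\TT \setminus S) \cup (\TT \setminus F)$ is $\theo$-definable, being a Boolean combination of definable sets, and has dimension $\max\{\dim(\TT \setminus S), \dim(\TT \setminus F)\} < d_0 = \dim \TT$; a definable subset of $\TT$ of dimension strictly smaller than $\dim \TT$ cannot be all of $\TT$, so its complement $S \cap F$ inside $\TT$ is nonempty, as required. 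I do not anticipate a genuine obstacle here, as the argument is short; the only point requiring mild care is keeping track of the ambient space in which closures and dimensions are taken, which the compactness of $\TT$ settles cleanly. (A more geometric alternative would observe that $\TT$ is a finite union of translates of a subtorus, hence a compact manifold of pure dimension $d_0$, and that a dense $\theo$-definable subset of such a manifold necessarily has nonempty interior, whence density of $F$ forces $F$ to meet that interior; but the frontier-dimension argument above is shorter and avoids appealing to the manifold structure of $\TT$.)
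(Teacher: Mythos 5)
Your proof is correct, but it routes through a different dimension-theoretic fact than the paper does, and the overall structure of the argument is different. The paper argues asymmetrically: from $\dim F = \dim \overline F = \dim \TT$ and~\cite[Ch.~4, Cor.~1.9]{dries_1998} it deduces that $F$ has nonempty interior in $\overline F = \overline S$, and then simply intersects this open set with the dense set~$S$. You instead treat $S$ and $F$ symmetrically and count dimensions of complements: from the frontier theorem $\dim(\overline X\setminus X)<\dim X$ you get $\dim(\TT\setminus S)<d_0$ and $\dim(\TT\setminus F)<d_0$, so $\TT\setminus(S\cap F)=(\TT\setminus S)\cup(\TT\setminus F)$ has dimension $<d_0=\dim\TT$ and therefore cannot exhaust $\TT$. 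Both arguments rest on the same o-minimal cell-decomposition machinery, and both are correct; yours is perhaps slightly tidier in its symmetry and avoids the topological ``dense set meets nonempty open set'' step, while the paper's version makes the intuition (a dense set must hit an interior point) a bit more transparent. The preliminary discussion of why $\TT$ is semi-algebraic and compact is harmless but not needed: the dimension facts you invoke apply to arbitrary definable sets in $\RR^{2k}$, and the hypotheses already hand you $\overline S=\overline F=\TT$ with closure taken in the ambient space.
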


\begin{proof}
	We start by stating two properties of the dimension of a definable set in an o-minimal theory $\theo$. 
	First, for any $\theo$-definable set $X\subseteq \RR^n$ we have $\dim(X) = \dim(\overline{X})$ \cite[Chapter 4, Theorem 1.8]{dries_1998}. 
	Secondly, if $X\subseteq Y$ are $\theo$-definable subsets of $\RR^n$ that have the same dimension,
	then $X$ has non-empty interior in $Y$ \cite[Chapter 4, Corollary 1.9]{dries_1998}.
	In the situation at hand, since $\dim(F) = \dim(\overline{F})$, it follows that $F$ has non-empty interior with respect to the subspace topology on $\overline{F}$ = $\overline{S}$. But then $S$ is dense in $\overline{S}$ while $F$ has non-empty interior in $\overline{S}$, and thus $S\cap F\neq \emptyset$.        	
\end{proof}

\begin{proof}[Proof of Claim~\ref{clm:ev-in}]
	We strengthen Claim~\ref{clm:ev-in-or-out}.
	Assume by way of contradiction that there exist $p\in \TT$ and $t_0\in \RR$ such that  $P\cdot \ray(p,t_0)\cap \Inv=\emptyset$, and consider $J^{-1} \cdot \ray(p,t_0)$. Let $q\in \TT$ be $L^{-1}p=(\frac{p_1}{\xi_1},\ldots,\frac{p_k}{\xi_k})$ and let $t_1= \frac{t_0}{e}$. Then $p = L q$ and $t_0 = e t_1$ and, by Lemma~\ref{lem:J on ray}, $J \ray(q, t_1) = \ray(L q, e t_1) = \ray(p, t_0)$.
	Since $J$ is invertible, we conclude that $J^{-1}\ray(p,t_0)=\ray(q,t_1)$.
	
	We now claim that $P\cdot \ray(q,t_1)\cap \Inv =\emptyset$. Recall that $P\cdot \ray(p,t_0)\cap \Inv=\emptyset$. Applying $A^{-1}=P J^{-1} P^{-1}$, we have by the above that $P\cdot \ray(q,t_1)\cap A^{-1} \Inv=\emptyset$.  Since $A \Inv\subseteq \Inv$, then $\Inv\subseteq A^{-1}\Inv$, so we have $P\cdot \ray(q,t_1)\cap \Inv \subseteq P\cdot \ray(q,t_1)\cap A^{-1}\Inv=\emptyset$. 
	
	Clearly $t_1\le t_0$ and $\ray(q,t_0)\subseteq \ray(q,t_1)$, so in particular $P\cdot \ray(q,t_0)\cap \Inv=\emptyset$. Thus, assuming $P\cdot \ray(p, t_0) \cap \Inv = \emptyset$, we have just proved that $P\cdot \ray(L^{-1} p, t_0) \cap \Inv = \emptyset$; repeating this argument, we get that for every $n\in \NN$, the point $s=L^{-n}p$ satisfies $P\cdot \ray(s,t_0)\cap \Inv=\emptyset$.
	
	Let $S=\set{L^{-n}p \mid n\in \NN}$. Then $S$ is dense in $\TT$, since the group of multiplicative relations defined by the eigenvalues of $L^{-1}$ is the same as the one defined by those of $L$.
	Define $S'=\set{s\in \TT \mid P\cdot \ray(s,t_0)\cap \Inv=\emptyset}$. Then $S'$ is $\theo$-definable, and we have $S\subseteq S'\subseteq \TT$. Moreover, $\overline{S}=\TT$, so $\overline{S'}= \TT$. 
	
	We now prove that, in fact, $S' = \TT$.  Assuming (again by
	way of contradiction) that there exists $q\in \TT\setminus
	S'$, then by the definition of $S'$ we have $P\cdot \ray(q,t_0)\cap
	\Inv\neq \emptyset$. It follows that for every $n\in \NN$, the
	point $q'=L^nq$ also satisfies $P\cdot \ray(q',t_0)\cap \Inv\neq
	\emptyset$. Define $F=\set{L^nq \mid n\in \NN}$, then $F$ is
	dense in $\TT$. But then the set $F'=\set{q\in \TT \mid P \cdot \ray(q,t_0)\cap \Inv\neq \emptyset}$ satisfies $F\subseteq
	F'\subseteq \TT$ and $\overline{F'}=\TT$. Now the sets $S'$
	and $F'$ are both definable in \theo, and the topological
	closure of each of them is $\TT$. It follows from
	Lemma~\ref{lem:closure} that $F'\cap S'\neq \emptyset$, which is clearly a contradiction.
	Therefore, there is no $q \in \TT \setminus S'$; that is, $S' = \TT$.
	
	From this, however, it follows that $P\cdot \C_{t_0}\cap \Inv=\emptyset$, which is again a contradiction,
	since $P\cdot \C_{t_0} \cap \Orb \neq \emptyset$ and $\Orb\subseteq \Inv$, so we are done.
\end{proof}

\begin{proof}[Proof of Claim~\ref{clm:in-uniform}]
	Consider the function $f\colon\TT\to \RR$ defined by
	$f(p)=\inf\{t\in \RR \mid P\cdot \ray(p,t)$ $\subseteq \Inv\}$. By Claim~\ref{clm:ev-in} this function is well-defined. Since $P\cdot \ray(p,t)$ is $\theo$-definable, then so is $f$. Moreover, its graph $\Gamma(f)$ has finitely many connected components, and the same dimension as $\TT$. Thus, there exists an open set $K\subseteq \TT$ (in the induced topology on $\TT$) such that $f$ is continuous on $K$. Furthermore, $K$ is homeomorphic to $(0,1)^{m}$ for some $0\le m\le k$, and thus we can find sets $K''\subseteq K'\subseteq K$ such that $K''$ is open, and $K'$ is closed.\footnote{In case $m=0$, the proof actually follows immediately from Claim~\ref{clm:ev-in}, since $\TT$ is finite.} 
	Since $f$ is continuous on $K$, it attains a maximum on
	$K'$. Consider the set $\set{L^n\cdot K'' \mid n\in \NN}$. By
	the density of $\set{L^n \mid n\in\NN}$ in $\TT$, this is an
	open cover of $\TT$, and hence there is a finite subcover $\set{L^{n_1}K'',\ldots,L^{n_m}K''}$. Since $K''\subseteq K'$, it follows that $\set{L^{n_1}K',\ldots,L^{n_m}K'}$ is a finite closed cover of $\TT$.
	
	We now show that, for all $p \in \TT$, we have $f(L p) \le e \cdot f(p)$. Indeed, consider any $p \in \TT$ and $t > 0$ such that $P\cdot \ray(p, t) \sset \Inv$. Applying $A = P J P^{-1}$, we get $P J \cdot \ray(p, t) \sset A \Inv \sset \Inv$. By Lemma~\ref{lem:J on ray}, $J \cdot \ray(p,t)= \ray(L p, e t)$, so we can conclude that $P\cdot \ray(L p, e t) \sset \Inv$. This means that $P\cdot \ray(p, t) \sset \Inv$ implies $P\cdot \ray(L p, e t) \sset \Inv$; therefore, $f(L p) \le e \cdot f(p)$.
	
	Now denote $s_0=\max_{p\in K'} f(p)$. Then for every $1\le i\le m$ we have $\max_{p\in L^{n_i}K'}f(p)\le e^{n_i} s_0$; so $f(p)$ is indeed bounded on $\TT$.
\end{proof}

Finally, we conclude from Claim~\ref{clm:in-uniform} that there exists $t_0\ge 1$ such that $P\cdot \C_{t_0}\subseteq \Inv$. This completes the proof of Theorem~\ref{thm:inv contain quasi cone}.

\section{Deciding the Existence of O-Minimal Invariants}
\label{sec: decidability}
We now turn to the algorithmic aspects of invariants and present our
two main results, Theorems~\ref{thm: theo exp decidable schanuel} and \ref{thm: qsa invariants against sa decidable}.

Let $\theo$ be either $\theoreals$ or $\theoexp$. We consider the
following problem: given an LDS $(A, s)$, with $A\in \QQ^{d\times d}$
and $s\in \QQ^d$, and given an $\theo$-definable halting set
$F \subseteq \RR^d$, we wish to decide whether there exists an
o-minimal invariant $\Inv$ for $(A, s)$ that avoids $F$, and to compute such an invariant if it exists.
We term this question the \emph{O-Minimal Invariant Synthesis Problem for
	$\theo$-Definable Halting Sets}.

The following is a consequence of
Theorems~\ref{thm: P traj cone is invariant} and~\ref{thm:inv contain quasi cone}.

\begin{lemma}
	\label{lem: condition for existence of R inv}
	Let $(A,s)$ and $\theo$ be as above, and let $F$ be
	$\theo$-definable. Then there exists an o-minimal invariant
	$\Inv$ for $(A, s)$ that avoids $F$ iff there is some $t_0 \ge 1$ such
	that $P\cdot \C_{t_0}\cap F=\emptyset$ and such that $A^n s\notin F$ for
	every $0\le n\le \log t_0$.
	\end{lemma}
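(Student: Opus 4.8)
The plan is to prove the biconditional in Lemma~\ref{lem: condition for existence of R inv} by chaining together the two structural theorems already established. The right-to-left direction is the constructive one: given $t_0 \ge 1$ with $P\C_{t_0}\cap F=\emptyset$ and $A^n x\notin F$ for all $0\le n\le\log_{\rho_k}t_0$, I would invoke Theorem~\ref{thm: P traj cone is invariant} to conclude that the set $\Inv = P\C_{t_0}\cup\set{A^n x\mid n<\log_{\rho_k}t_0}$ is an $\theoexp$-definable (hence o-minimal) invariant for $(A,x)$. It then remains to check that $\Inv$ avoids $F$: the $P\C_{t_0}$ part is disjoint from $F$ by hypothesis, and the finitely many adjoined orbit points $A^n x$ with $n<\log_{\rho_k}t_0$ also avoid $F$ by the second hypothesis (which covers the range $0\le n\le\log_{\rho_k}t_0$, slightly more than needed). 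Hence $\Inv\cap F=\emptyset$, giving the desired o-minimal invariant.

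For the left-to-right direction, suppose $\Inv$ is an o-minimal invariant for $(A,x)$ that avoids $F$; by definition $\Inv$ is definable in some o-minimal extension $\theo$ of $\theoexp$. Applying Theorem~\ref{thm:inv contain quasi cone} to this $\theo$-definable invariant yields some $t_0\ge 1$ with $P\C_{t_0}\subseteq\Inv$. Since $\Inv\cap F=\emptyset$, it follows immediately that $P\C_{t_0}\cap F=\emptyset$. For the orbit condition, recall that $\Orb\subseteq\Inv$ (the orbit is contained in any invariant, as it is the smallest one containing $x$ and stable under $A$), and $\Inv$ avoids $F$, so $A^n x\notin F$ for \emph{every} $n\in\NN$, in particular for $0\le n\le\log_{\rho_k}t_0$. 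This establishes both required conditions.

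There is no real obstacle here: the lemma is essentially a bookkeeping corollary that packages Theorems~\ref{thm: P traj cone is invariant} and~\ref{thm:inv contain quasi cone} into a clean decidability-ready statement. The only point deserving a moment's care is making sure the range of $n$ in the two conditions matches up cleanly between the two directions --- in the forward direction one in fact gets $A^n x\notin F$ for all $n$, which is stronger than the stated finite range, and in the backward direction Theorem~\ref{thm: P traj cone is invariant} supplies exactly the invariant $P\C_{t_0}\cup\set{A^n x\mid n<\log_{\rho_k}t_0}$ whose finite part lies in the range $0\le n<\log_{\rho_k}t_0\le\log_{\rho_k}t_0$. So the inequalities are consistent and no off-by-one issue arises.
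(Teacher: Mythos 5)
Your proof is correct and follows essentially the same route as the paper: both directions chain Theorems~\ref{thm: P traj cone is invariant} and~\ref{thm:inv contain quasi cone}, and the observations about orbit containment and the range of $n$ match. The one detail the paper handles that you omit is in the backward direction: the paper first replaces $t_0$ by a rational $t'_0 \ge t_0$ before invoking Theorem~\ref{thm: P traj cone is invariant}, so that the resulting invariant $P\C_{t'_0}\cup\set{A^nx\mid 0\le n\le\log_{\rho_k}t'_0}$ is genuinely parameter-free $\theoexp$-definable (for an arbitrary real $t_0$, the cone is only definable with $t_0$ as a parameter); the extra orbit points between $\log_{\rho_k}t_0$ and $\log_{\rho_k}t'_0$ still avoid $F$ because they lie in $P\C_{t_0}$. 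This distinction is immaterial for the lemma as stated (an o-minimal invariant suffices), but it matters for the "explicitly define such an invariant in $\theoexp$" clause used in Theorems~\ref{thm: theo exp decidable schanuel} and~\ref{thm: qsa invariants against sa decidable}.
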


\begin{proof}
	By Theorem~\ref{thm:inv contain quasi cone}, if an o-minimal
	invariant \Inv for $(A, s)$ exists, then there exists $t_0 \ge 1$
	such that $P\cdot \C_{t_0}\subseteq \Inv$. Moreover,
	$\Inv \cap F = \emptyset$ implies $\Orb \cap F = \emptyset$, so that
	$A^n s \not \in F$ for every~$n\in \NN$, and in particular for
	$0\le n\le \log t_0$.
	
	Conversely, let there be $t_0 \ge 1$ such that 
	$P\cdot \C_{t_0}\cap F=\emptyset$ and, for every
	$0\le n\le \log t_0$, it holds that $A^n s \notin F$. Let 
	$t'_0 \in \mathbb{Q}$ be such that $t'_0 \geq t_0$. By
	Theorem~\ref{thm: P traj cone is invariant}, the set
	$\Inv=P\cdot \C_{t'_0}\cup \set{A^n s \mid 0\le n\le \log t'_0}$ is an
	\theoexp-definable
	invariant that avoids $F$.
\end{proof}

Observe that the formula
$\exists t_0 \ge 1: P\cdot \C_{t_0}\cap F=\emptyset$ is a sentence in
$\theoexp$, and by Lemma~\ref{lem: condition for existence of R inv},
deciding the existence of an invariant amounts to determining the
truth value of this sentence.

\subsection{Decidability for \theoexp-definable halting sets assuming Schanuel's conjecture.}
Applying Theorem~\ref{thm: P traj cone is invariant},
we note that an invariant for $(A, s)$ that avoids $F$---if one exists---%
can always be defined in $\theoexp$.

\begin{theorem}
	\label{thm: theo exp decidable schanuel}
	The O-Minimal Invariant Synthesis Problem for $\theoexp$-Definable
	Halting Sets is decidable, assuming Schanuel's conjecture. 
	Moreover, in positive instances, we can explicitly define such an invariant in \theoexp.
\end{theorem}
\begin{proof}
	Assume Schanuel's conjecture. Then
        by~\cite{MacintyreWilkie1996}, the first-order theory of the
        strucure $\theoexp$ is decidable. Thus we can decide whether
        there exists $t_0\ge 1$ such that $P\cdot \C_{t_0}\cap
        F=\emptyset$. If the sentence is false, then by
        Lemma~\ref{lem: condition for existence of R inv} there is no
        invariant, and we are done.  If the sentence is true, however,
        it still remains to check whether $A^n s \not \in F$ for every
        $0\le n\le \log t_0$. While we can decide whether $A^ns \in F$
        for a fixed $n$, observe that we do not have an \emph{a
          priori} bound on $t_0$. Hence we proceed as follows: For
        every $n\in 1,2,\ldots$, check both whether $A^n s\in F$ and,
        for $t_0=e^n$, whether $PC_{t_0}\cap F =\emptyset$. In case
        $A^n s \in F$, then clearly there is no invariant, since
        $\Orb\cap F\neq \emptyset$, and we are done. On the other
        hand, if $PC_{t_0}\cap F =\emptyset$, then return the
	\theoexp -definable 
	invariant as per
	Lemma~\ref{lem: condition for existence of R inv}.
	
	We claim that the above procedure always halts. Indeed, we know that there exists $t_0$ for which
	$P\cdot \C_{t_0}\cap F =\emptyset$. Thus, either for some
	$n<\log t_0$, it holds that $A^ns\in F$, in which case there
	is no invariant and we halt when we reach $n$, or we proceed until
	we reach $n\ge \log t_0$, in which case we halt and return
	the invariant.
\end{proof}
\begin{remark}
	It is interesting to note that, should Schanuel's conjecture turn out
	to be false, the above procedure could still never return a `wrong'
	invariant. The worst that could happen is that decidability of
	$\theoexp$ fails in that the putative algorithm
	of~\cite{MacintyreWilkie1996} simply never halts, so no verdict is
	ever returned. 
\end{remark}

\subsection{Unconditional decidability for semi-algebraic halting sets.}
\label{subsec: decidability for semialgebraci halting}
In this section we restrict attention to semi-algebraic halting sets. Our main result is as follows.

\begin{theorem}
	\label{thm: sa invariants against sa decidable}
	The O-Minimal Invariant Synthesis Problem for Semi-Algebraic
	Halting Sets is decidable. 
	Moreover, in positive instances, we can explicitly define such an invariant in \theoreals.
\end{theorem}
Theorem~\ref{thm: sa invariants against sa decidable} may come as a double surprise: first, as shown by Lemma~\ref{lem: condition for existence of R inv}, deciding the existence of an o-minimal invariant amounts to determining the truth value of the sentence $\exists t_0\ge 1:P\cdot \C_{t_0}\cap F=\emptyset$. Since $P\cdot \C_{t_0}$ might not be definable in \theoreals, then this sentence is only \theoexp-definable, even when $F$ is \theoreals-definable. Therefore, determining the truth value of this sentence is not immediate. 
Second, even if we do manage to determine the truth value of this sentence, the synthesized invariant as per Theorem~\ref{thm: P traj cone is invariant} (namely $P\cdot \C_{t_0}$ along with a finite ``tail'') is \theoexp definable, but might not be \theoreals definable. Thus, the synthesized invariant in Theorem~\ref{thm: sa invariants against sa decidable} must be more elaborate than $P\cdot \C_{t_0}$.

We prove Theorem~\ref{thm: sa invariants against sa decidable} in parts, first addressing the decidability of the existence of an o-minimal invariant (Theorem~\ref{thm: qsa invariants against sa decidable} below), and then showing how to synthesize, in positive instances, a \theoreals-definable invariant (Section~\ref{subsubsec: sa inv for sa target}).

\begin{theorem}
	\label{thm: qsa invariants against sa decidable}
	The O-Minimal Invariant Synthesis Problem for Semi-Algebraic
	Halting Sets is decidable. 
	Moreover, in positive instances, we can explicitly define such an invariant in \theoexp.
\end{theorem}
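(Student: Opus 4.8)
The plan is to reduce the existence question, via Lemma~\ref{lem: condition for existence of R inv}, to deciding the truth of the $\theoexp$-sentence $\exists t_0\ge 1:P\C_{t_0}\cap F=\emptyset$, together with the auxiliary finite check that $A^nx\notin F$ for $n\le\log_{\rho_k}t_0$. Since $F$ is now semi-algebraic, the key obstacle is that the set $P\C_{t_0}$ still involves the transcendental ingredients $t^{b_i}$ and $\log_{\rho_k}t$, so the sentence genuinely lives in $\theoexp$ rather than $\theoreals$, and we have no unconditional decision procedure for $\theoexp$. The whole point of the unconditional result is to show that, against a semi-algebraic target, this particular sentence can nevertheless be decided. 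So the main work is to massage $\exists t_0:P\C_{t_0}\cap F=\emptyset$ into an equivalent statement decidable in $\theoreals$.

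The approach I would take is to exploit the structure of the trajectory cone. Recall $P\C_{t_0}=\bigcup_{p\in\TT}P\ray(p,t_0)$, and each ray is parametrised by a single real $t\ge t_0$; as $t\to\infty$ the ray is dominated by its highest-modulus blocks, so each coordinate of $P\ray(p,t)$ is asymptotically $t^{b_i}$ times a trigonometric-polynomial-in-$\log t$ expression in the coordinates of $p\in\TT$. I would first argue that whether $P\C_{t_0}$ meets $F$ for some large $t_0$ depends only on the ``asymptotic cone'' obtained by keeping, in each Jordan block, the dominant term; since $F$ is semi-algebraic (hence, by o-minimality of $\theoexp$, its frontier behaviour at infinity is tame), intersection of a full trajectory cone truncation with $F$ for all large $t_0$ is controlled by whether the semi-algebraic set $F$ contains points arbitrarily far out along the cone directions. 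The second ingredient is that $\TT$ is a semi-algebraic (indeed algebraic) subset of the torus and $\set{L^n\mid n\in\NN}$ is dense in it, so one can replace quantification over $p\in\TT$ by a semi-algebraic condition. Putting these together, I would show $\exists t_0:P\C_{t_0}\cap F=\emptyset$ is equivalent to a $\theoreals$-sentence — roughly, that the semi-algebraic set $F$ avoids an explicitly describable semi-algebraic approximation of the cone beyond some height — which Tarski's procedure decides. Once that sentence is decided positively, one still runs the $n=1,2,\dots$ search of Theorem~\ref{thm: theo exp decidable schanuel}, checking $A^nx\in F$ and, for $t_0=\rho_k^n$, whether $P\C_{t_0}\cap F=\emptyset$ (the latter also reducible to $\theoreals$ by the same technique, since for a fixed rational $t_0$ the powers $\rho_k^{nb_i}=\rho_i^n$ are algebraic and the only remaining transcendence is in $t^{b_i}$ for $t\ge t_0$, which we handle as above). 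This terminates for exactly the reason given in the earlier proof, and on success returns an explicitly $\theoexp$-definable invariant $P\C_{t'_0}\cup\set{A^nx\mid n\le\log_{\rho_k}t'_0}$ with $t'_0\in\QQ$.

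The step I expect to be the main obstacle is making precise the reduction of ``$P\C_{t_0}\cap F=\emptyset$ for some $t_0$'' to a genuinely semi-algebraic condition, i.e.\ eliminating the transcendental functions $t\mapsto t^{b_i}$ and $t\mapsto\log_{\rho_k}t$. The difficulty is twofold: the exponents $b_i=\log_{\rho_k}\rho_i$ are in general transcendental real numbers, so $t^{b_i}$ is not an algebraic function of $t$; and the polynomials $Q_{i,j}$ have arguments $\log_{\rho_k}t$, mixing logarithmic and power-of-$t$ growth. The resolution should be to observe that for the purpose of asking whether a semi-algebraic set $F$ is eventually avoided, only the relative growth rates matter: after reparametrising ($u=t^{1/D}$ for a suitable common denominator, or passing to $\log t$ as the parameter and studying the curve's behaviour at infinity), one reduces to a finite case analysis over the dominant monomials, each contributing a semi-algebraic ``direction'' that $F$ must avoid far enough out; o-minimality of $\theoexp$ guarantees there are finitely many such asymptotic regimes and that each is semi-algebraically describable. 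Formalising that this asymptotic data is both (i) sufficient to decide the emptiness of the intersection for large $t_0$ and (ii) computable from $A$ and $x$ by semi-algebraic means is the technical heart of the argument, and is where one must be most careful.
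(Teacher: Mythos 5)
Your high-level plan is aligned with the paper's: reduce to the $\theoexp$-sentence $\exists t_0\ge 1:P\C_{t_0}\cap F=\emptyset$ via Lemma~\ref{lem: condition for existence of R inv}, push the quantification over $p\in\TT$ into a semi-algebraic set (the paper's $U$), cell-decompose, and then analyse the asymptotic sign of each resulting one-variable expression $R_l(\vectau(t))$ as $t\to\infty$. Up to that point you and the paper are doing the same thing.

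The gap is in exactly the step you flag as ``the technical heart'', and it is not filled by what you propose. After substituting $\vectau(t)$ into a polynomial $R_l$, you get a sum of terms of the form $t^{n_{i,1}b_1+\cdots+n_{i,k}b_k}\,f_i(\log_{\rho_k}t)$. To determine the asymptotic sign you must identify the leading term, and that requires deciding, for integers $n_1,\dots,n_k$, whether $n_1 b_1+\cdots+n_k b_k$ is zero, positive, or negative. Since $b_j=\log\rho_j/\log\rho_k$ with $\rho_j$ algebraic, this is a question about the sign of a linear form in logarithms of algebraic numbers. This is \emph{not} resolvable ``by semi-algebraic means'' or by reparametrisations like $u=t^{1/D}$ (the exponents $b_j$ are in general irrational and no common denominator exists), nor does o-minimality by itself tell you how to compare two transcendental exponents effectively. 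The paper's resolution is Baker's theorem: it gives an effectively computable $\epsilon>0$ such that either the linear form vanishes or its absolute value exceeds $\epsilon$, so that a sufficiently precise rational approximation determines the sign. This transcendental-number-theoretic input is precisely what makes Lemma~\ref{lem: eventually out of sa set} (and hence the theorem) unconditional, and your proposal neither invokes it nor supplies a substitute, so as written the decision procedure is not actually effective.

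A secondary, smaller imprecision: you suggest replacing quantification over $p\in\TT$ using density of $\{L^n\}$, but the paper needs no such replacement — $\TT$ is already a semi-algebraic subset of the torus, so the universally quantified set $U$ is directly definable in $\theoreals$ and cell decomposition applies immediately. That said, this is a stylistic difference, not an error; the real missing ingredient is Baker's theorem.
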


By Lemma~\ref{lem: condition for existence of R inv}, in order to prove Theorem~\ref{thm: qsa invariants against sa decidable}, it is enough to decide the truth value of the $\theoexp$-sentence $\exists t_0\ge 1: P\cdot \C_{t_0}\cap F=\emptyset$.  
Indeed, as $A^ns\in \QQ^d$, one can always check unconditionally
whether for a given $n$ the vector $A^n s$ belongs to the
semi-algebraic set $F$. The algorithm is then otherwise the same as
that presented in the proof of Theorem~\ref{thm: theo exp decidable
	schanuel}.
The proof of Theorem~\ref{thm: qsa invariants against sa
	decidable} therefore boils down to the following lemma.

\begin{lemma}
	\label{lem: eventually out of sa set}
	For $F$ a semi-algebraic set, it is decidable whether there exists $t_0 \ge 1$ such that $P\cdot \C_{t_0}\cap F=\emptyset$.
\end{lemma}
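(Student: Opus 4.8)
The plan is to reduce the statement "$\exists t_0 \ge 1 : P\C_{t_0} \cap F = \emptyset$'' to the satisfiability of a sentence in the theory $\theoexp$ and then, crucially, to observe that the particular sentence that arises is of a restricted syntactic shape for which decidability is known \emph{unconditionally} — that is, without Schanuel's conjecture. First I would spell out the sentence explicitly: unfolding Definition~\ref{def:trajectory cone}, the condition $P\C_{t_0} \cap F = \emptyset$ says that there is no $t \ge t_0$ and no tuple $(p_1,\dots,p_k) \in \TT$ with the point $P(t^{b_1}p_1 Q_{1,1}(\log_{\rho_k} t), \dots, t^{b_k} p_k Q_{k,d_k}(\log_{\rho_k} t))^\top$ lying in $F$. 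Here $\TT$ is defined by finitely many polynomial (multiplicative) equations in the real and imaginary parts of the $p_i$, so membership in $\TT$ is semi-algebraic; the $b_i$ are real algebraic numbers (logarithms of rationals to an algebraic base — these are algebraic over $\QQ$, being ratios $\log\rho_i/\log\rho_k$… actually they are generally transcendental, so care is needed here, see below); and $F$ is semi-algebraic. Substituting $u = \log_{\rho_k} t$, so that $t = \rho_k^u = e^{u\ln\rho_k}$ and $t^{b_i} = e^{u\ln\rho_i}$, turns every occurrence of $t$ and its powers into genuine exponential terms $e^{u c_i}$ with $c_i = \ln\rho_i$, and the polynomials $Q_{i,j}(u)$ are ordinary polynomials in $u$. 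Thus the whole sentence $\exists t_0 \ge 1 : P\C_{t_0} \cap F = \emptyset$ becomes a sentence in the language of $\theoexp$ (with the algebraic constants $\rho_i$, entries of $P$, and coefficients of the $Q_{i,j}$, all of which are algebraic).

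The heart of the argument, and the main obstacle, is that we cannot simply invoke decidability of $\theoexp$, since that is exactly what is conditional on Schanuel's conjecture. Instead I would exploit the \emph{special form} of the sentence. The key points are: (i) the only genuinely transcendental function applied is $x \mapsto e^x$, and it is applied only to $\RR$-linear combinations of the single real variable $u$ (and not nested); (ii) the outermost existential $\exists t_0$ and the implicit quantifiers over $t \ge t_0$ and over $p \in \TT$ range over bounded-or-ray semi-algebraic sets. For such "exponential-polynomial in one variable'' sentences there are unconditional decidability results — for instance, the question reduces to deciding the sign of a finite Boolean combination of exponential-polynomial inequalities in one variable $u$ over a ray $u \ge u_0$, which can be handled by explicit quantifier-elimination-free methods: one computes the (finitely many) zeros of the relevant exponential polynomials, using effective bounds on their number and location (Khovanskii-type fewnomial bounds together with effective root separation), and then evaluates signs on the finitely many resulting intervals. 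Concretely, the plan is: eliminate the $p$-quantifiers first by a semi-algebraic projection (Tarski), leaving a sentence purely about the one real variable $u$ and the finitely many exponentials $e^{uc_i}$; then observe that asking whether this holds for all sufficiently large $u$ is decidable because an exponential polynomial in one variable has finitely many zeros, all effectively locatable, and its eventual sign is therefore computable.

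A subtlety I would need to address carefully is that the exponents $c_i = \ln \rho_i$ need not be rationally related — indeed the ratios $b_i = \ln\rho_i/\ln\rho_k$ may be irrational — so after the substitution we are dealing with $e^{uc_1},\dots,e^{uc_k}$ where the $c_i$ are $\QQ$-linearly independent in general. This means we genuinely need the one-variable theory of the reals augmented with \emph{several} independent exponentials, i.e.\ $\langle \RR, +, \cdot, <, e^{c_1 x}, \dots, e^{c_k x}\rangle$ restricted to sentences with one real variable. Decidability of the one-variable fragment of this structure does not require Schanuel's conjecture: it follows from the fact that o-minimality of the expansion (a consequence of Wilkie's theorem, since each $e^{c_i x}$ is definable from $\exp$) gives that every definable subset of $\RR$ is a finite union of points and intervals, \emph{and} that the relevant endpoints — being zeros of exponential polynomials — are effectively approximable to arbitrary precision while being effectively bounded away from the algebraic data defining $F$ and from each other (again via effective fewnomial bounds). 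Hence one can decide, for the resulting one-variable formula $\psi(u)$, whether $\exists u_0\, \forall u \ge u_0\, \psi(u)$ holds. I expect this effectivity step — turning o-minimality of the one-variable fragment into an actual algorithm by producing explicit separation and counting bounds for zeros of exponential polynomials with algebraic exponents — to be the technically demanding core of the proof, whereas the reduction of $P\C_{t_0}\cap F = \emptyset$ to such a one-variable exponential-polynomial sentence is essentially bookkeeping built on Lemma~\ref{lem: P ray is real} and Definition~\ref{def:trajectory cone}.
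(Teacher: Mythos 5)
Your reduction to a one‑variable exponential‑polynomial sign question is essentially the same as the paper's: you pull the quantification over $p\in\TT$ inside a semi‑algebraic set (the paper packages this as a set $U\subseteq\CC^d$ and asks whether $\vectau(t)\in U$ eventually), then appeal to o‑minimality of $\theoexp$ to reduce to finitely many sign conditions $R_l(\vectau(t))\sim_l 0$ holding for all large $t$. Up to that point you are on track.

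The gap is in the effectivity step. You correctly flag that the exponents $b_i=\log\rho_i/\log\rho_k$ are in general transcendental, and you note "care is needed here", but the tool you propose — Khovanskii‑type fewnomial bounds plus effective root separation — does not address the actual obstacle. Once you write $R_l(\vectau(t))=\sum_i t^{\alpha_i} f_i(\log_{\rho_k}t)$ with $\alpha_i=n_{i,1}b_1+\dots+n_{i,k}b_k$, determining the eventual sign requires first knowing which $\alpha_i$ is largest and, in particular, deciding equalities among the $\alpha_i$ (so that coincident exponents can be merged). This is not a question about counting or separating zeros of an exponential polynomial; it is a prior question of deciding the sign of expressions $\sum_s m_s b_s = (\sum_s m_s\log\rho_s)/\log\rho_k$, i.e.\ the sign of a $\ZZ$‑linear form in logarithms of algebraic numbers. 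Fewnomial bounds control the \emph{number} of real zeros of a given exponential polynomial, but tell you nothing about whether two such linear forms are equal, and "effective root separation" in the usual sense is about distinct algebraic roots, not about this transcendence problem. Without resolving it you cannot identify the dominant term, so you cannot compute the eventual sign.

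The paper's key (and unconditional) ingredient, which your proposal is missing, is Baker's theorem on linear forms in logarithms (Theorem~\ref{thm:baker}): it gives an effectively computable $\epsilon>0$ such that each linear form $\sum_s m_s\log\rho_s$ is either exactly $0$ or has absolute value $>\epsilon$. Approximating the form to error $<\epsilon/3$ then decides equality and sign, which lets one sort the exponents $\alpha_i$, aggregate equal ones, extract the genuine leading term $t^{\alpha}\,f(\log_{\rho_k}t)$, and read off the eventual sign from the leading coefficient of $f$. To repair your argument you should replace the fewnomial/root‑separation machinery with this Baker‑theorem comparison of exponents; the rest of your reduction then goes through.
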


Our key tool is the following celebrated result from transcendental
number theory:
\begin{theorem}[Baker's theorem~\cite{baker1993logarithmic}]
	\label{thm:baker}
	Let $\alpha_1,\ldots,\alpha_m\in \CC$ be algebraic numbers
	different from 0 and let $b_1,\ldots,b_m\in \ZZ$ be
	integers. Write $\Lambda=b_1\log \alpha_1+\ldots + b_m\log \alpha_m$. There exists a number $C$ effectively computable from $b_1, \ldots, b_m, \alpha_1, \ldots, \alpha_m$ such that if $\Lambda\neq 0$ then $|\Lambda|>H^{-C}$, where $H$ is the maximum height of $\alpha_1,\ldots,\alpha_m$.
\end{theorem}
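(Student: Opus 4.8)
The statement is the classical Baker theorem on linear forms in logarithms, and it is invoked here as a black box from transcendental number theory; so a ``proof'' amounts to recalling the Gelfond--Baker method of auxiliary functions. I sketch it at the level of the parameter choices, since the underlying estimates are entirely standard (see~\cite{baker1993logarithmic}). The plan is a proof by contradiction: assume $\Lambda \neq 0$ but $|\Lambda| \le H^{-C}$ for a constant $C$ to be pinned down, and derive a contradiction whose failure point determines $C$ explicitly.

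\textbf{Reduction.} First I would reduce to the case where $\log\alpha_1,\ldots,\log\alpha_m$ are linearly independent over $\QQ$: if they are not, a nontrivial $\QQ$-relation lets one eliminate a logarithm and induct on $m$, tracking how the heights and degrees grow (this is part of where the dependence of $C$ on $m$ enters). Fix branches of the logarithm once and for all, and set $\beta_j = b_j/b_m$ (assuming $b_m \neq 0$), so that $\log\alpha_m = -\sum_{j<m}\beta_j\log\alpha_j + \Lambda/b_m$; the content of the hypothesis is precisely that $\alpha_m$ differs from $\prod_{j<m}\alpha_j^{-\beta_j}$ by a multiplicative factor extremely close to $1$.

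\textbf{Auxiliary function.} Choose integer parameters $L$ (monomial degree), $K$ (number of interpolation nodes) and $R$ (interpolation order), all given by explicit polynomial expressions in $m$, $\log H$, and $\log\max_i|b_i|$. By Siegel's lemma, applied to the linear system expressing that
\[
  \Phi(z) = \sum_{\lambda_0=0}^{L}\cdots\sum_{\lambda_m=0}^{L} p(\lambda_0,\ldots,\lambda_m)\, z^{\lambda_0}\prod_{j=1}^{m}\alpha_j^{\lambda_j z}
\]
has a zero of order at least $R$ at each of $z = 1,\ldots,K$, obtain rational integers $p(\cdot)$, not all zero, of controlled size. The key observation is that at an integer node $z=\ell$ the quantity $\prod_j \alpha_j^{\lambda_j\ell}$ equals, up to the tiny multiplicative error coming from $|\Lambda|$, an algebraic number of bounded degree and height; hence after clearing denominators the vanishing conditions really are integer linear equations, and choosing the parameters so that the number of unknowns is at least twice the number of equations makes Siegel's lemma applicable.

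\textbf{Extrapolation and contradiction.} Then I would run the standard extrapolation: prove by induction on $i$ that $|\Phi^{(r)}(\ell)|$ is minuscule for all $1\le\ell\le K_i$ and $0\le r<R_i$, along a schedule with $K_i$ increasing, $R_i$ decreasing and $K_iR_i$ essentially constant. Each inductive step combines (i) an \emph{analytic} bound: $\Phi$ is entire of small order, so dividing out the known zeros and applying the maximum-modulus principle on an enlarged disk shows $\Phi$, and hence the new derivatives, are \emph{small}; and (ii) an \emph{arithmetic} bound: each $\Phi^{(r)}(\ell)$ is, modulo the $|\Lambda|$-error, an algebraic number of controlled degree and height, so by the Liouville inequality it is either $0$ or bounded below by $H^{-C'}$. ``Small'' together with ``Liouville'' forces it to be effectively zero. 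After a number of steps bounded in terms of $m$, $\Phi$ has so many zeros (with multiplicity) that a Vandermonde/$\det$ argument forces all $p(\lambda)$ to vanish, contradicting the choice in the previous step --- equivalently, one exhibits a single value that is simultaneously nonzero and below its own Liouville bound. Reading off the largest exponent appearing yields the explicit $C$. The delicate part, and the main obstacle, is the bookkeeping in this extrapolation: the radii and the sequences $K_i,R_i$ must be chosen so that the analytic ``gain'' strictly dominates the arithmetic ``loss'' at every step while the parameters stay within the budget fixed by Siegel's lemma, and so that the induction terminates after finitely many (effectively bounded) rounds. Everything else --- Siegel's lemma, the order estimate for $\Phi$, the Liouville inequality --- is routine, and since no step is non-constructive, the constant $C$ comes out effectively computable from $b_1,\ldots,b_m,\alpha_1,\ldots,\alpha_m$, as claimed.
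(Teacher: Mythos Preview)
Your opening sentence is exactly right: the paper does not prove this theorem at all. Baker's theorem is stated with a citation to~\cite{baker1993logarithmic} and then used purely as a black box in the proof of Lemma~\ref{lem: eventually out of sa set} (to decide the sign of integer combinations $n_1 b_1+\cdots+n_k b_k$ of logarithms of algebraic numbers). There is therefore no ``paper's own proof'' to compare against.

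That said, what you have written is a faithful high-level outline of the Gelfond--Baker auxiliary-function method and would be an appropriate sketch if one were asked to justify the citation. The reduction to $\QQ$-independent logarithms, the Siegel-lemma construction of $\Phi$, the extrapolation with the analytic/arithmetic dichotomy, and the final Vandermonde-type nonvanishing are the standard ingredients, and you correctly flag the parameter bookkeeping as the only genuinely delicate part. Nothing in your sketch is wrong as a roadmap; it simply goes well beyond what the paper itself does, which is to quote the statement and apply it.
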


Recall that the subgroup $\TT$ of the torus defined by the
multiplicative relations of the eigenvalues of $A$ is a semi-algebraic
set.  Write $\vectau(t) = (t^{b_1} Q_{1,1}(\log  t), \ldots,
t^{b_k} Q_{k,d_k}(\log t))$, and consider the set
$$U=\set{\left(
	z_1,
	\ldots,
	z_d
	\right)^\tp\in \CC^d
	\mid
	\forall (p_1,\ldots,p_d)\in \TT,\ 		 			
	P\left(
	z_1 p_1,
	\ldots,
	z_d p_d
	\right)^\tp\in \RR^d \setminus F} \, .$$
It is enough to decide whether there exists $t_0 \ge 1$ such that for all $t\ge t_0$, $\vectau(t) \in U$.

Observe that $U$ is a semi-algebraic set (see Remark~\ref{rmk:complex definable} on
Page~\pageref{rmk:complex definable}). 

Using cell decomposition, describe $U$ as a finite union of connected
components, each of which is given by a conjunction of the form
$\bigwedge_{l=1}^{m} R_l(u_1,\ldots,u_d, v_1,\ldots,v_d)\sim_l
0$. Here, for every $1\le l\le m$, $\mbox{$\sim_l$} \in \set{>,=}$ and $R_l$ is a polynomial with integer coefficients in variables $u_1,\ldots,u_d, v_1, \ldots, v_d$; for each $i$, the variables $u_i$ and $v_i$ represent $\Re z_i$ and $\Im z_i$, the real and imaginary parts of $z_i$, respectively.

We claim that we can restrict our attention to a single connected component. Indeed, first note that by substituting $\vectau(t)$ for $(z_1, \ldots, z_d)$ in the conjunction $\bigwedge_{l=1}^{m} R_l \sim_l 0$, we get a constraint on $t$ expressible in \theoexp. By o-minimality of \theoexp, the set of all $t \in \RR$ satisfying this conjunction is a finite union of points and (possibly unbounded) intervals. Therefore, since the number of connected components is finite, the following two statements are equivalent: (i) there exists $t_0 \ge 1$ such that for all $t\ge t_0$ it holds that $\vectau(t) \in U$, and (ii) there exists a single connected component of $U$ for which this holds (perhaps with a larger value of $t_0$).

Thus, we now need to decide whether we can find $t_0 \ge 1$ such that
for every $t\ge t_0$ it holds that $R_l(\vectau(t))\sim_l 0$ for every
$1\le l\le m$. Fix $1\le l\le m$.  Recall that we consider every
vector in $\CC^d$ as a vector in $\RR^{2 d}$; thus, the polynomial
$R_l$ has the form
\begin{equation*}
	\sum_i
	a_i \cdot u_1^{n'_{i,1}} \cdot \ldots \cdot u_d^{n'_{i,d}}
	\cdot v_1^{n''_{i,1}} \cdot \ldots \cdot v_d^{n''_{i,d}} \, ,
\end{equation*}
with $a_i\in \ZZ$ and $n'_{i,s}, n''_{i,s} \in \ZZ_{\ge 0}$.
Therefore, $R_l(\vectau(t))$ is the sum of terms of the form
\begin{align*}
	a_i \cdot t^{(n'_{i,1} + n''_{i,1}) \cdot b_1+\ldots + (n'_{i,d} + n''_{i,d}) \cdot b_k} \cdot
	& (\Re Q_{1,1}(\log t))^{n'_{i,1}} \cdot \ldots \cdot (\Re Q_{k,d_k}(\log t))^{n'_{i,k}} \cdot \\
	& (\Im Q_{1,1}(\log t))^{n''_{i,1}} \cdot \ldots \cdot (\Im Q_{k,d_k}(\log t))^{n''_{i,k}}
\end{align*}
where $Q_{i,j}(\cdot)$, as above, are polynomials from the definition of trajectory cones.
Note that all $Q_{i,j}$ are only evaluated at real points, and hence it is easy
for us to refer to $\Re Q_{i,j}$ and $\Im Q_{i,j}$; these are
polynomials in one real variable with real algebraic coefficients.
We rewrite $R_l(\vectau(t))$ in the form
\begin{equation*}
	\sum_i
	t^{n_{i,1} \cdot b_1+\ldots + n_{i,k} \cdot b_k} \cdot
	f_i(\log t)
\end{equation*}
where each $f_i(\cdot)$ is also a polynomial with real algebraic
coefficients, and $b_1,\ldots,b_k$ are distinct logarithms of the
moduli of the eigenvalues of $A$.
We can compute all these polynomials $f_i$, eliminating from the sum all terms
that have $f_i \equiv 0$.

Observe that $R_l(\vectau(t))$ is a function of a single variable $t >
0$. In order to reason about the sign of this expression as $t\to
\infty$, we need to determine its leading term. To that end, we first
need to decide for every $i\neq j$ whether the two new exponents
${n_{i,1} b_1+\ldots +n_{i,k}b_k}$ and ${n_{j,1} b_1+\ldots
  +n_{j,k}b_k}$ are equal and, if not, which is larger. (If the
exponents are equal, we aggregate the polynomials $f_i$ and $f_j$
accordingly.) By rearranging the terms, it's enough to decide whether
$n_1 b_1+\ldots +n_k b_k>0$ for some $n_1,\ldots,n_k\in \ZZ$. Recall
that $b_j=\log \rho_j$. By Baker's theorem, there exists an
effectively computable $\epsilon>0$ such that either $n_1 b_1+\ldots
+n_k b_k=0$, or $|n_1 b_1+\ldots +n_k b_k|>\epsilon$.

We now proceed by computing an approximation $\Delta$ of $n_1
b_1+\ldots +n_k b_k$ with additive error at most
$\frac{\epsilon}{3}$. This is easily done, as we are dealing with
computable quantities. We then have that $\Delta\in
[-\frac{\epsilon}{3},\frac\epsilon3]$ iff $n_1 b_1+\ldots +n_k b_k=0$,
and otherwise we have $\textrm{sign}(\Delta)=\textrm{sign}(n_1
b_1+\ldots +n_k b_k)$.  Thus we can sort the exponents $n_{i,1} \cdot
b_1+\ldots + n_{i,k} \cdot b_k$ in descending order and, using the
same procedure, compare each of them to $0$.

Now consider the term that has the largest exponent, $m$;
suppose this term is $t^m \cdot f_i(\log t)$.
Then the sign of $R_l(\vectau(t))$ as $t \to \infty$ is determined by the sign
of the leading term of the polynomial $f_i(\cdot)$;
only if the sum is empty can the sign of $R_l(\vectau(t))$ be $0$ for
all sufficiently large $t$.

The argument above shows that
we can compute the leading terms of the expressions $R_l(\vectau(t))$ and decide whether the conjunction $\bigwedge_{l=1}^{m} R_l \sim_l 0$ holds for all $t \ge t_0$ starting from some $t_0$. This completes the proof.

\subsubsection{Existence of Semi-Algebraic Invariants for Semi-Algebraic Halting Sets}
\label{subsubsec: sa inv for sa target}
We now proceed to show that in the case of a semi-algebraic halting set, the existence of an o-minimal invariant implies the existence of a semi-algebraic invariant (note that clearly the other implication is trivial). 

Intuitively, the trajectory cone $\C_{t_0}$ is not already a
semi-algebraic set for two ``reasons'': the $\log t$ factors, and the
possibly-irrational exponents $b_i$. In the following, we
over-approximate these factors by semi-algebraic components. However,
as we will show, the approximation must carefully take into account
the relationships between the exponents.

Let $\epsilon\in (0,1)$, and consider the trajectory cone
$\C_{t_0}$. We define an over-approximating set of $\C_{t_0}$ by
replacing the $\log t$ factors by an interval ranging from some
constant lower bound $\mu$ to $t^\epsilon$. That is, define for
$\mu\in \QQ$ and $t_0\in \RR$ \newcommand{\Capp}{\widetilde{{\C}}}
$$\Capp_{t_0,\epsilon,\mu}=\set{\begin{pmatrix}
	t^{b_1} p_1 Q_{1,1}(s)\\
	\vdots\\
	t^{b_k} p_k Q_{k,d_k}(s)\\
	\end{pmatrix}\ :\ (p_1,\ldots,p_{d})\in \TT,\ t\ge t_0, \mu\le s\le t^{\epsilon}}.$$	

Next, we modify the irrational $b_i$ exponents into rational ones. This is done in two parts. First, we approximate $b_i$ by lower and upper rational bounds, next we enforce additive relationships among the approximations.

Consider vectors $\lowervec=(\ell_1,\ldots,\ell_k),\uppervec=(u_1,\ldots,u_k)\in \QQ^k$ with $\lowervec\le \mathbf{b}\le \uppervec$ and define $$\boxvec(\lowervec,\uppervec)=\set{\mathbf{c}\in \RR^k: \lowervec\le \mathbf{c}\le \uppervec}.$$ 
Second, let $\mathbf{b}=(b_1,\ldots,b_k)$, and define 
\[
\SS=\set{\mathbf{c}\in \RR^k: \forall \mathbf{z}\in \ZZ^k,\ \mathbf{b}\cdot \mathbf{z}=0 \text{ implies }\mathbf{c}\cdot \mathbf{z}=0}
\]
to be the set of vectors that maintain the integer additive relationships among the $b_i$. 

We are now ready to define the \emph{fat trajectory cone}, which is a semi-algebraic set that approximates $\Capp_{t_0,\epsilon,\mu}$. Given $t_0,\epsilon,\mu,\lowervec,$ and $\uppervec$, we define
\[\F_{t_0,\epsilon,\mu}^{\lowervec,\uppervec}=\set{\begin{pmatrix}
	t^{c_1} p_1 Q_{1,1}(s)\\
	\vdots\\
	t^{c_k} p_k Q_{k,d_k}(s)\\
	\end{pmatrix}\ :\ (p_1,\ldots,p_{d})\in \TT,\ t\ge t_0, \mu\le s\le t^{\epsilon}, \mathbf{c}\in \SS\cap \boxvec(\lowervec,\uppervec)}.\]
That is, we replace the exponents vector $\mathbf{b}$ with a vector $\mathbf{c}$ that is close enough to $\mathbf{b}$, and maintains its additive relations.

It is not immediate from the definition of
$\F_{t_0,\epsilon,\mu}^{\lowervec,\uppervec}$ that it is indeed a
semi-algebraic set, nor that we can compute a representation of
it. Indeed, we cannot quantify the exponents $\mathbf{c}$ in the
first-order theory of the reals, and it is not clear that the set
$\SS$ can be finitely represented. We start by addressing these
issues.
\begin{lemma}
	\label{lem: fat cone is semialgebraic}
	$\F_{t_0,\epsilon,\mu}^{\lowervec,\uppervec}$ is definable in $\theoreals$, and we can compute a representation of it.
\end{lemma}
\begin{proof}
	Recall that $\mathbf{b}=(b_1,\ldots,b_k)$ where $b_i=\log
        \rho_i$ for every $i$.  Consider the abelian group
        $L=\set{\mathbf{z}\in \ZZ^k: \varrho^\mathbf{z}=1}$ where
        $\varrho^\mathbf{z}=\rho_1^{z_1}\cdots
        \rho_k^{z_k}$. By~\cite{Mas88} we can compute a finite basis
        $\set{\mathbf{z}^1,\ldots,\mathbf{z}^m}$ for $L$.
	
	Note that for every $\mathbf{z}\in \ZZ^k$ we have that $\mathbf{b}\cdot \mathbf{z}=0$ iff $\varrho^\mathbf{z}=1$. Thus, we can write
	\[
	\SS=\set{\mathbf{c}\in \RR^k: \forall \mathbf{z}\in \ZZ^k, \mathbf{z}\in L \text{ implies } \mathbf{c}\cdot \mathbf{z}=0}
	=\set{\mathbf{c}\in \RR^k: \bigwedge_{i=1}^m\mathbf{c}\cdot \mathbf{z}^i=0}.
	\]
	Let $\SS'=\set{\mathbf{r}\in \RR^k: \bigwedge_{i=1}^m\mathbf{r}^{\mathbf{z}^i}=1}$ 
	where, as before, $\mathbf{r}^\mathbf{z}=r_1^{z_1}\cdots r_k^{z_k}$.
	Consider some $t>1$. For every $\mathbf{c}\in \RR^k$, denote $t^\mathbf{c}=(t^{c_1},\ldots,t^{c_k})$, then clearly $\mathbf{c}\in \SS$ iff $t^{\mathbf{c}}\in \SS'$.  
	In addition, $\mathbf{c}\in \boxvec(\lowervec,\uppervec)$ iff $t^\mathbf{c}\in \boxvec(t^{\lowervec},t^{\uppervec})$.
	We conclude that we can write
	\[
	\F_{t_0,\epsilon,\mu}^{\lowervec,\uppervec}=\set{\begin{pmatrix}
		r_1 p_1 Q_{1,1}(s)\\
		\vdots\\
		r_k p_k Q_{k,d_k}(s)\\
		\end{pmatrix}\ :\ (p_1,\ldots,p_{d})\in \TT,\ t\ge t_0, \mu\le s\le t^{\epsilon}, \mathbf{r}\in \SS'\cap \boxvec(t^{\lowervec},t^{\uppervec}) }.
	\]
	Since $\TT$, $\SS'$ and $\boxvec(t^{\lowervec},t^{\uppervec})$ are all semi-algebraic sets (the latter due to $\lowervec$ and $\uppervec$ being rational vectors), then so is $\F_{t_0,\epsilon,\mu}^{\lowervec,\uppervec}$.
	
\end{proof}

The following lemma is the main result of this section, and states that if $P\cdot \C_{t_0}$ is disjoint from a semi-algebraic halting set, then we can approximate it by some appropriate fat cone.
\begin{lemma}
	\label{lem: fat cone sa inv}
	Let $Y$ be a semi-algebraic set such that $P\cdot \C_{t_0}\cap Y= \emptyset$ for some $t_0\in \RR$, then  there exist $\mu,t_1\in \RR$, $\epsilon>0$, and $\lowervec,\uppervec\in \QQ^d$ such that
	$P\cdot \F_{t_1,\epsilon,\mu}^{\lowervec,\uppervec} \cap Y=\emptyset$ and $A\cdot P\cdot \F_{t_1,\epsilon,\mu}^{\lowervec,\uppervec}\subseteq P\cdot \F_{t_1,\epsilon,\mu}^{\lowervec,\uppervec}$.
\end{lemma}

The rest of the section is devoted to the proof of Lemma~\ref{lem: fat cone sa inv}.
We start by showing that the fat cone is invariant under $J$.
\begin{lemma}
	\label{lem: fat cone invariant}
	For every $\epsilon$, there exists $t_0$ such that for every $t_1\ge t_0$ and for every $\mu,\lowervec,\uppervec$ we have that 
	$J\cdot \F_{t_1,\epsilon,\mu}^{\lowervec,\uppervec}\subseteq \F_{t_1,\epsilon,\mu}^{\lowervec,\uppervec}$.
\end{lemma} 
\begin{proof}
	Following the proof of Lemma~\ref{lem:J on ray}, we see that for $y=\begin{pmatrix}
	t^{c_1} p_1 Q_{1,1}(s)\\
	\vdots\\
	t^{c_k} p_k Q_{k,d_k}(s)\\
	\end{pmatrix}$ we have that $Jy= \begin{pmatrix}
	e^{b_1} t^{c_1} \xi_1 p_1 Q_{1,1}(s+1)\\
	\vdots\\
	e^{b_k} t^{c_k} \xi_k p_k Q_{k,d_k}(s+1)\\
	\end{pmatrix}$.
	Thus, it suffices to show that for large enough $t_0$, for every $t\ge t_0$ the following hold:
	\begin{enumerate}
		\item If $\mu\le s\le t^{\epsilon}$ then $\mu\le s+1\le (e t)^\epsilon$.
		\item $(e^{b_1}t^{c_1},\ldots,e^{b_k}t^{c_k})=((e t)^{c'_1},\ldots,(e t)^{c'_k})$ for some $\mathbf{c'}\in \SS\cap \boxvec(\lowervec,\uppervec)$.
	\end{enumerate} 
	For item 1, observe that since $\mu\le s$, we clearly have
        $\mu\le s+1$. For the second inequality, since $\epsilon>0$,
        we have that $e^{\epsilon}-1>0$. Let $t_0$ be large
        enough such that $(e^\epsilon-1)t^\epsilon\ge 1$ for
        every $t\ge t_0$, then for every $t\ge t_0$ we have that
        $t^\epsilon+1 \le (e t)^\epsilon$. It follows that
        $s+1\le t^\epsilon+1\le (e t)^\epsilon$, as desired.
	
	For item 2, define $\mathbf{c'}=(c'_1,\ldots,c'_k)$ by setting $c'_i=\frac{b_i+\log (t)c_i}{1+\log (t)}$. It is easy to check that $e^{b_i}t^{c_i})=(e t)^{c'_i}$ for every $1\le i\le k$. Furthermore, since $\SS\cap \boxvec(\lowervec,\uppervec)$ is a convex set, and $\mathbf{c'}$ is a convex combination of $\mathbf{b}$ and $\mathbf{c}$, it follows that $\mathbf{c'}\in \SS\cap \boxvec(\lowervec,\uppervec)$, and we are done.
\end{proof}

It remains to prove that we can choose a fat cone that is disjoint from the target set.
\begin{lemma}
	\label{lem: fat cone disjoint}
	Let $Y$ be a semi-algebraic set, and let $t_0\in \RR$ be such that $P \C_{t_0}\cap Y=\emptyset$, then there exist $\epsilon>0$, $\mu\in \RR, t_1\in \RR,$ and $\lowervec,\uppervec\in \QQ^k$ such that $P\cdot  \F_{t_1,\epsilon,\mu}^{\lowervec,\uppervec}\cap Y=\emptyset$.
\end{lemma}
\begin{proof}
	Working along the lines of the proof of Lemma~\ref{lem: eventually out of sa set}, we define \[U=\set{\left(
		z_1,
		\ldots,
		z_d
		\right)^\tp\in \CC^d
		:
		\forall (p_1,\ldots,p_d)\in \TT,\ 		 			
		P\left(
		z_1 p_1,
		\ldots,
		z_d p_d
		\right)^\tp\in \RR^d \setminus Y},\]
	and let $\phi$ be a quantifier-free formula defining $U$. As in the proof of Lemma~\ref{lem: eventually out of sa set}, let $\vectau(t)=(t^{b_1}Q_{1,1}(\log t),\ldots,t^{b_k}Q_{k,d_k}(\log t))$, then for the purpose of evaluating $\phi(\vectau(t))$, we can assume $\phi$ is of the form  $\bigwedge_{l=1}^{m} R_l(u_1,\ldots,u_d, v_1,\ldots,v_d)\sim_l 0$. Since $P\cdot \C_{t_0}\cap Y=\emptyset$, then $R_l(\vectau(t))\sim_l 0$ for every $l$ and every $t\ge t_0$. Writing 
	\begin{equation}
		\label{eq: Rl semialgebraic inv}
		R_l(\vectau(t))=\sum_i
		t^{n_{i,1} \cdot b_1+\ldots + n_{i,k} \cdot b_k} \cdot
		f_i(\log t),
	\end{equation}
	we show, first, how to replace the exponents vector $\mathbf{b}$ by any exponents vector in $\SS\cap \boxvec(\lowervec,\uppervec)$ for appropriate $\lowervec,\uppervec$, and second, how to replace $\log t$ by $s$ for $\mu\le s\le t^{\epsilon}$ for some appropriate $\mu$ and $\epsilon$, while maintaining the inequality or equality prescribed by $\sim_l$.

	Denote by $N$ the set of vectors $\mathbf{n_i}=(n_{i,1},\ldots,n_{i,k})$ of exponents in~\eqref{eq: Rl semialgebraic inv}. 
	As in the proof of Lemma~\ref{lem: eventually out of sa set}, we can compute $\delta>0$ such that for every $\mathbf{n},\mathbf{n'}\in N$, if $\mathbf{b}\cdot (\mathbf{n}-\mathbf{n'})\neq 0$ then $|\mathbf{b}\cdot (\mathbf{n}-\mathbf{n'})|>\delta$.	
	Let $M=\max_{\mathbf{n,n'}\in N}\norm{\mathbf{n}-\mathbf{n'}}$ (where $\norm{\cdot}$ is the Euclidean norm in $\RR^k$).
	\begin{claim}
		\label{prop: approximate exponent}
		Let $\mathbf{c}\in \RR^d$ be such that $\norm{\mathbf{b}-\mathbf{c}}\le \frac{\delta}{2M}$, then, for all $\mathbf{n},\mathbf{n'}\in N$, if $\mathbf{b}\cdot (\mathbf{n}-\mathbf{n'})> 0$ then $\mathbf{c}\cdot (\mathbf{n}-\mathbf{n'})>\frac{\delta}{2}$.
	\end{claim}
	\begin{proof}[Proof of Claim~\ref{prop: approximate exponent}]
		Suppose that $\mathbf{b}\cdot (\mathbf{n}-\mathbf{n'})> 0$, then by the above we have $\mathbf{b}\cdot (\mathbf{n}-\mathbf{n'})> \delta$, and hence
		\begin{align*}
			\mathbf{c}\cdot (\mathbf{n}-\mathbf{n'})= \mathbf{b}\cdot (\mathbf{n}-\mathbf{n'})+(\mathbf{c}-\mathbf{b})\cdot (\mathbf{n}-\mathbf{n'})
			\ge  \delta-\norm{\mathbf{c}-\mathbf{b}}\cdot\norm{\mathbf{n}-\mathbf{n'}}
			\ge  \delta-\frac{\delta}{2M}M=\frac{\delta}{2}.
		\end{align*}
	\end{proof}
	We can now choose $\lowervec$ and $\uppervec$ such that 
	$u_i-\ell_i\le \frac{\delta}{2M\sqrt{k}}$
	so that for all $\mathbf{c}\in \boxvec(\lowervec,\uppervec)$ we have 
	\[
	\norm{\mathbf{b}-\mathbf{c}}\le \sqrt{\sum_{i=1}^k (u_i-\ell_i)^2}\le \sqrt{\frac{\delta^2}{(2M)^2}}=\frac{\delta}{2M}.
	\]
	It follows from Claim~\ref{prop: approximate exponent} and from the definition of $\SS$ that, intuitively, every $\mathbf{c}\in \boxvec(\lowervec,\uppervec)$ maintains the order of magnitude of the monomials $t^{n_{i,1} \cdot b_1+\ldots + n_{i,k} \cdot b_k}$ in $R_l(\vectau(t))$. 
	
	More precisely, let $\vectau'(t)=(t^{c_1}Q_{1,1}(\log t),\ldots,t^{c_k}Q_{k,d_k}(\log t))$ for some $\mathbf{c}\in \boxvec(\lowervec,\uppervec)$, Then 
	the exponent of the ratio of every two monomials in $R_l(\vectau'(t))$ has the same (constant) sign as the corresponding exponent in $R_l(\vectau(t))$. Moreover, the exponents of distinct monomials in $R_l(\vectau(t))$ differ by at least $\frac{\delta}{2}$ in $R_l(\vectau'(t))$.
	
	We are now ready to handle the $\log t$ factor. First, assume
        $t_0$ is large enough that $f_i(\log t)$ has constant sign for
        every $t\ge t_0$ (otherwise increase $t_0$ accordingly). We
        can now let $\mu$ be large enough that for every $s\ge \mu$,
        the sign of $f_i(\log t)$ coincides with the sign of
        $f_i(s)$ for every $t\ge t_0$. It remains to give an upper
        bound on $s$ of the form $t^\epsilon$ such that plugging
        $f_i(s)$ instead of $f_i(\log t)$ does not change the
        ordering of the terms (by their magnitude) in
        $R_l(\vectau'(t))$.  Let $D$ be the maximum degree of all
        polynomials $f_i$ in~\eqref{eq: Rl semialgebraic inv}, and
        define $\epsilon=\frac{\delta}{3D}$ (in fact, any
        $\epsilon<\frac{\delta}{2D}$ would suffice), then we have
        that, for $t\ge t_0$, $f_i(s)$ has the same sign as
        $f_i(\log t)$ for every $\mu\le s\le t^{\epsilon}$ (by
        our choice of $\mu$), and 
        guarantees that plugging $t^{\epsilon}$ instead of $t$ does
        not change the ordering of the terms (by their magnitude) in
        $R_l$.  since the exponents of the monomials in
        $R_l(\vectau'(t))$ differ by at least $\frac{\delta}{2}$, it
        follows that their order is maintained when replacing
        $\log t$ by $s$.
	
	Let $\vectau''(t)=(t^{c_1}Q_{1,1}(s),\ldots,t^{c_k}Q_{k,d_k}(s))$ for some $\mathbf{c}\in \boxvec(\lowervec,\uppervec)$ and $\mu\le s\le t^\epsilon$, then by our choice of $\epsilon$, the dominant term in $R_l(\vectau''(t))$ is the same as that in $R_l(\vectau(t))$. Therefore, for large enough $t$, the signs of $R_l(\vectau''(t))$ and $R_l(\vectau(t))$ are the same.
	
	By repeating this argument for each $R_l$, we can compute $t_1\in \RR$, $\epsilon>0$, $\mu\in \RR$, and $\lowervec,\uppervec\in \QQ^k$ such that $P\cdot  \F_{t_1,\epsilon,\mu}^{\lowervec,\uppervec}\cap Y=\emptyset$, and we are done.
\end{proof}
We are now ready to complete the proof of Theorem~\ref{thm: sa invariants against sa decidable}.
\begin{proof}[Proof of Theorem~\ref{thm: sa invariants against sa decidable}]
	By Theorem~\ref{thm: qsa invariants against sa decidable}, we can decide whether there exists an o-minimal invariant for the LDS $(A,s)$ that avoids a semialgebraic target $Y$. Moreover, in positive instances we can synthesize such an invariant of the form $\Inv=P\cdot \C_{t_0}\cup \set{A^n s \mid 0\le n\le \log t_0}$ for some $t_0\ge 1$. 
	
	In particular, $P\cdot \C_{t_0}\cap Y = \emptyset$, so by Lemma~\ref{lem: fat cone sa inv}, there exist $\mu, t_1\in \RR$, $\epsilon>0$, and $\lowervec,\uppervec\in \QQ^d$ such that $P\cdot \F_{t_1,\epsilon,\mu}^{\lowervec,\uppervec} \cap Y=\emptyset$ and $P\cdot \F_{t_1,\epsilon,\mu}^{\lowervec,\uppervec}$ is invariant under $A$.  Furthermore, by Lemma~\ref{lem: fat cone is semialgebraic}, $\F_{t_1,\epsilon,\mu}^{\lowervec,\uppervec}$ is semialgebraic.
	
	Naively, one might think that $\F_{t_1,\epsilon,\mu}^{\lowervec,\uppervec}$ with the addition of the ``finite tail'' $\set{A^n s \mid 0\le n\le \log t_1}$ would make up a semialgebraic invariant for $(A,s)$. This, however, may not be the case. Recall that 
	\[
	\F_{t_1,\epsilon,\mu}^{\lowervec,\uppervec}=\set{\begin{pmatrix}
		r_1 p_1 Q_{1,1}(s)\\
		\vdots\\
		r_k p_k Q_{k,d_k}(s)\\
		\end{pmatrix}\ :\ (p_1,\ldots,p_{d})\in \TT,\ t\ge t_1, \mu\le s\le t^{\epsilon}, \mathbf{r}\in \SS'\cap \boxvec(t^{\lowervec},t^{\uppervec}) }.
	\]
	and observe that if $\mu$ is large and $\epsilon$ is small, then for small enough $t\ge t_1$, it may be the case that $t^\epsilon<\mu$, so there does not exist an appropriate $s$. Thus, we may need to extend the finite tail to capture this. 
	
	Let $t_2\ge t_1$ such that for every $n\ge \log t_2$ we have that $A^ns\in \F_{t_1,\epsilon,\mu}^{\lowervec,\uppervec}$ (note that we can compute $t_2$ by iterating over the orbit, until the first point in $\F_{t_1,\epsilon,\mu}^{\lowervec,\uppervec}$ is found), then $\F_{t_1,\epsilon,\mu}^{\lowervec,\uppervec}\cup \set{A^n s \mid 0\le n\le \log t_2}$ is a semialgebraic invariant for $(A,s)$ that avoids $Y$, which concludes the proof.
\end{proof}

\begin{acks}	
  Jo\"el Ouaknine was supported by ERC grant AVS-ISS (648701) and by DFG grant 389792660 as part of
TRR 248 (see https://perspicuous-computing.science). James Worrell was supported by EPSRC Fellowship EP/N008197/1. Shaull Almagor has received funding from the European Union's Horizon 2020 research and innovation programme under the Marie Skłodowska-Curie grant agreement No 837327.
\end{acks}

\bibliographystyle{ACM-Reference-Format}
\bibliography{Main}


\begin{thebibliography}{37}


\ifx \showCODEN    \undefined \def \showCODEN     #1{\unskip}     \fi
\ifx \showDOI      \undefined \def \showDOI       #1{#1}\fi
\ifx \showISBNx    \undefined \def \showISBNx     #1{\unskip}     \fi
\ifx \showISBNxiii \undefined \def \showISBNxiii  #1{\unskip}     \fi
\ifx \showISSN     \undefined \def \showISSN      #1{\unskip}     \fi
\ifx \showLCCN     \undefined \def \showLCCN      #1{\unskip}     \fi
\ifx \shownote     \undefined \def \shownote      #1{#1}          \fi
\ifx \showarticletitle \undefined \def \showarticletitle #1{#1}   \fi
\ifx \showURL      \undefined \def \showURL       {\relax}        \fi
\providecommand\bibfield[2]{#2}
\providecommand\bibinfo[2]{#2}
\providecommand\natexlab[1]{#1}
\providecommand\showeprint[2][]{arXiv:#2}

\bibitem[\protect\citeauthoryear{Baker and W{\"u}stholz}{Baker and
  W{\"u}stholz}{1993}]%
        {baker1993logarithmic}
\bibfield{author}{\bibinfo{person}{Alan Baker} {and} \bibinfo{person}{Gisbert
  W{\"u}stholz}.} \bibinfo{year}{1993}\natexlab{}.
\newblock \showarticletitle{Logarithmic forms and group varieties}.
\newblock \bibinfo{journal}{\emph{J. reine angew. Math}} \bibinfo{volume}{442},
  \bibinfo{number}{19-62} (\bibinfo{year}{1993}), \bibinfo{pages}{3}.
\newblock


\bibitem[\protect\citeauthoryear{Ben{-}Amram and Genaim}{Ben{-}Amram and
  Genaim}{2014}]%
        {BG14}
\bibfield{author}{\bibinfo{person}{Amir~M. Ben{-}Amram} {and}
  \bibinfo{person}{Samir Genaim}.} \bibinfo{year}{2014}\natexlab{}.
\newblock \showarticletitle{Ranking Functions for Linear-Constraint Loops}.
\newblock \bibinfo{journal}{\emph{J. {ACM}}} \bibinfo{volume}{61},
  \bibinfo{number}{4} (\bibinfo{year}{2014}), \bibinfo{pages}{26:1--26:55}.
\newblock


\bibitem[\protect\citeauthoryear{Ben{-}Amram, Genaim, and Masud}{Ben{-}Amram
  et~al\mbox{.}}{2012}]%
        {BGM12}
\bibfield{author}{\bibinfo{person}{Amir~M. Ben{-}Amram}, \bibinfo{person}{Samir
  Genaim}, {and} \bibinfo{person}{Abu~Naser Masud}.}
  \bibinfo{year}{2012}\natexlab{}.
\newblock \showarticletitle{On the Termination of Integer Loops}.
\newblock \bibinfo{journal}{\emph{{ACM} Trans. Program. Lang. Syst.}}
  \bibinfo{volume}{34}, \bibinfo{number}{4} (\bibinfo{year}{2012}),
  \bibinfo{pages}{16:1--16:24}.
\newblock


\bibitem[\protect\citeauthoryear{Braverman}{Braverman}{2006}]%
        {Bra06}
\bibfield{author}{\bibinfo{person}{Mark Braverman}.}
  \bibinfo{year}{2006}\natexlab{}.
\newblock \showarticletitle{Termination of Integer Linear Programs}. In
  \bibinfo{booktitle}{\emph{Computer Aided Verification, 18th International
  Conference, {CAV} 2006, Seattle, WA, USA, August 17-20, 2006, Proceedings}}.
  \bibinfo{pages}{372--385}.
\newblock


\bibitem[\protect\citeauthoryear{Cassels}{Cassels}{1965}]%
        {cassels1965introduction}
\bibfield{author}{\bibinfo{person}{John~W.S. Cassels}.}
  \bibinfo{year}{1965}\natexlab{}.
\newblock \bibinfo{booktitle}{\emph{An Introduction to Diophantine
  Approximation}}.
\newblock \bibinfo{publisher}{Cambridge University Press}.
\newblock


\bibitem[\protect\citeauthoryear{Chonev, Ouaknine, and Worrell}{Chonev
  et~al\mbox{.}}{2013}]%
        {COW13}
\bibfield{author}{\bibinfo{person}{Ventsislav Chonev},
  \bibinfo{person}{Jo{\"{e}}l Ouaknine}, {and} \bibinfo{person}{James
  Worrell}.} \bibinfo{year}{2013}\natexlab{}.
\newblock \showarticletitle{The {O}rbit {P}roblem in higher dimensions}. In
  \bibinfo{booktitle}{\emph{Symposium on Theory of Computing Conference,
  STOC'13, Palo Alto, CA, USA, June 1-4, 2013}}. \bibinfo{pages}{941--950}.
\newblock


\bibitem[\protect\citeauthoryear{Chonev, Ouaknine, and Worrell}{Chonev
  et~al\mbox{.}}{2015}]%
        {COW15}
\bibfield{author}{\bibinfo{person}{Ventsislav Chonev},
  \bibinfo{person}{Jo{\"{e}}l Ouaknine}, {and} \bibinfo{person}{James
  Worrell}.} \bibinfo{year}{2015}\natexlab{}.
\newblock \showarticletitle{The Polyhedron-Hitting Problem}. In
  \bibinfo{booktitle}{\emph{Proceedings of the Twenty-Sixth Annual {ACM-SIAM}
  Symposium on Discrete Algorithms, {SODA} 2015, San Diego, CA, USA, January
  4-6, 2015}}. \bibinfo{pages}{940--956}.
\newblock


\bibitem[\protect\citeauthoryear{Chonev, Ouaknine, and Worrell}{Chonev
  et~al\mbox{.}}{2016}]%
        {COW16}
\bibfield{author}{\bibinfo{person}{Ventsislav Chonev},
  \bibinfo{person}{Jo{\"{e}}l Ouaknine}, {and} \bibinfo{person}{James
  Worrell}.} \bibinfo{year}{2016}\natexlab{}.
\newblock \showarticletitle{On the Complexity of the Orbit Problem}.
\newblock \bibinfo{journal}{\emph{J. {ACM}}} \bibinfo{volume}{63},
  \bibinfo{number}{3} (\bibinfo{year}{2016}), \bibinfo{pages}{23:1--23:18}.
\newblock


\bibitem[\protect\citeauthoryear{Col{\'{o}}n, Sankaranarayanan, and
  Sipma}{Col{\'{o}}n et~al\mbox{.}}{2003}]%
        {CSS03}
\bibfield{author}{\bibinfo{person}{Michael Col{\'{o}}n},
  \bibinfo{person}{Sriram Sankaranarayanan}, {and} \bibinfo{person}{Henny
  Sipma}.} \bibinfo{year}{2003}\natexlab{}.
\newblock \showarticletitle{Linear Invariant Generation Using Non-linear
  Constraint Solving}. In \bibinfo{booktitle}{\emph{Computer Aided
  Verification, 15th International Conference, {CAV} 2003, Boulder, CO, USA,
  July 8-12, 2003, Proceedings}}. \bibinfo{pages}{420--432}.
\newblock


\bibitem[\protect\citeauthoryear{Cousot}{Cousot}{2005}]%
        {Cou05}
\bibfield{author}{\bibinfo{person}{Patrick Cousot}.}
  \bibinfo{year}{2005}\natexlab{}.
\newblock \showarticletitle{Proving Program Invariance and Termination by
  Parametric Abstraction, Lagrangian Relaxation and Semidefinite Programming}.
  In \bibinfo{booktitle}{\emph{Verification, Model Checking, and Abstract
  Interpretation, 6th International Conference, {VMCAI} 2005, Paris, France,
  January 17-19, 2005, Proceedings}}. \bibinfo{pages}{1--24}.
\newblock


\bibitem[\protect\citeauthoryear{Cousot and Halbwachs}{Cousot and
  Halbwachs}{1978}]%
        {CH78}
\bibfield{author}{\bibinfo{person}{Patrick Cousot} {and}
  \bibinfo{person}{Nicolas Halbwachs}.} \bibinfo{year}{1978}\natexlab{}.
\newblock \showarticletitle{Automatic Discovery of Linear Restraints Among
  Variables of a Program}. In \bibinfo{booktitle}{\emph{Conference Record of
  the Fifth Annual {ACM} Symposium on Principles of Programming Languages,
  Tucson, Arizona, USA, January 1978}}. \bibinfo{pages}{84--96}.
\newblock


\bibitem[\protect\citeauthoryear{Dries}{Dries}{1998}]%
        {dries_1998}
\bibfield{author}{\bibinfo{person}{L.~P. D. van~den Dries}.}
  \bibinfo{year}{1998}\natexlab{}.
\newblock \bibinfo{booktitle}{\emph{Tame Topology and O-minimal Structures}}.
\newblock \bibinfo{publisher}{Cambridge University Press}.
\newblock


\bibitem[\protect\citeauthoryear{Fijalkow, Lefaucheux, Ohlmann, Ouaknine,
  Pouly, and Worrell}{Fijalkow et~al\mbox{.}}{2019a}]%
        {FLOOPW19}
\bibfield{author}{\bibinfo{person}{Nathana{\"{e}}l Fijalkow},
  \bibinfo{person}{Engel Lefaucheux}, \bibinfo{person}{Pierre Ohlmann},
  \bibinfo{person}{Jo{\"{e}}l Ouaknine}, \bibinfo{person}{Amaury Pouly}, {and}
  \bibinfo{person}{James Worrell}.} \bibinfo{year}{2019}\natexlab{a}.
\newblock \showarticletitle{On the Monniaux Problem in Abstract
  Interpretation}. In \bibinfo{booktitle}{\emph{Static Analysis, 26th
  International Symposium, {SAS} 2019, Porto, Portugal, October 8-11, 2019,
  Proceedings}}.
\newblock


\bibitem[\protect\citeauthoryear{Fijalkow, Ohlmann, Ouaknine, Pouly, and
  Worrell}{Fijalkow et~al\mbox{.}}{2017}]%
        {FOOPW17}
\bibfield{author}{\bibinfo{person}{Nathana{\"{e}}l Fijalkow},
  \bibinfo{person}{Pierre Ohlmann}, \bibinfo{person}{Jo{\"{e}}l Ouaknine},
  \bibinfo{person}{Amaury Pouly}, {and} \bibinfo{person}{James Worrell}.}
  \bibinfo{year}{2017}\natexlab{}.
\newblock \showarticletitle{Semialgebraic Invariant Synthesis for the
  Kannan-Lipton Orbit Problem}. In \bibinfo{booktitle}{\emph{34th Symposium on
  Theoretical Aspects of Computer Science, {STACS} 2017, March 8-11, 2017,
  Hannover, Germany}}. \bibinfo{pages}{29:1--29:13}.
\newblock


\bibitem[\protect\citeauthoryear{Fijalkow, Ohlmann, Ouaknine, Pouly, and
  Worrell}{Fijalkow et~al\mbox{.}}{2019b}]%
        {FOOPW19}
\bibfield{author}{\bibinfo{person}{Nathana{\"{e}}l Fijalkow},
  \bibinfo{person}{Pierre Ohlmann}, \bibinfo{person}{Jo{\"{e}}l Ouaknine},
  \bibinfo{person}{Amaury Pouly}, {and} \bibinfo{person}{James Worrell}.}
  \bibinfo{year}{2019}\natexlab{b}.
\newblock \showarticletitle{Complete Semialgebraic Invariant Synthesis for the
  Kannan-Lipton Orbit Problem}.
\newblock \bibinfo{journal}{\emph{Theory Comput. Syst.}} \bibinfo{volume}{63},
  \bibinfo{number}{5} (\bibinfo{year}{2019}), \bibinfo{pages}{1027--1048}.
\newblock


\bibitem[\protect\citeauthoryear{Gupta, Henzinger, Majumdar, Rybalchenko, and
  Xu}{Gupta et~al\mbox{.}}{2008}]%
        {GHMRX08}
\bibfield{author}{\bibinfo{person}{Ashutosh Gupta}, \bibinfo{person}{Thomas~A.
  Henzinger}, \bibinfo{person}{Rupak Majumdar}, \bibinfo{person}{Andrey
  Rybalchenko}, {and} \bibinfo{person}{Ru{-}Gang Xu}.}
  \bibinfo{year}{2008}\natexlab{}.
\newblock \showarticletitle{Proving non-termination}. In
  \bibinfo{booktitle}{\emph{Proceedings of the 35th {ACM} {SIGPLAN-SIGACT}
  Symposium on Principles of Programming Languages, {POPL} 2008, San Francisco,
  California, USA, January 7-12, 2008}}. \bibinfo{pages}{147--158}.
\newblock


\bibitem[\protect\citeauthoryear{Hosseini, Ouaknine, and Worrell}{Hosseini
  et~al\mbox{.}}{2019}]%
        {HOW19}
\bibfield{author}{\bibinfo{person}{Mehran Hosseini},
  \bibinfo{person}{Jo{\"{e}}l Ouaknine}, {and} \bibinfo{person}{James
  Worrell}.} \bibinfo{year}{2019}\natexlab{}.
\newblock \showarticletitle{Termination of Linear Loops over the Integers}. In
  \bibinfo{booktitle}{\emph{46th International Colloquium on Automata,
  Languages, and Programming, {ICALP} 2019, July 9-12, 2019, Patras, Greece.}}
  \emph{(\bibinfo{series}{LIPIcs})}, Vol.~\bibinfo{volume}{132}.
  \bibinfo{publisher}{Schloss Dagstuhl - Leibniz-Zentrum fuer Informatik},
  \bibinfo{pages}{118:1--118:13}.
\newblock


\bibitem[\protect\citeauthoryear{Kannan and Lipton}{Kannan and Lipton}{1980}]%
        {KL80}
\bibfield{author}{\bibinfo{person}{Ravindran Kannan} {and}
  \bibinfo{person}{Richard~J. Lipton}.} \bibinfo{year}{1980}\natexlab{}.
\newblock \showarticletitle{The Orbit Problem is Decidable}. In
  \bibinfo{booktitle}{\emph{Proceedings of the 12th Annual {ACM} Symposium on
  Theory of Computing, April 28-30, 1980, Los Angeles, California, {USA}}}.
  \bibinfo{pages}{252--261}.
\newblock


\bibitem[\protect\citeauthoryear{Kannan and Lipton}{Kannan and Lipton}{1986}]%
        {KL86}
\bibfield{author}{\bibinfo{person}{Ravindran Kannan} {and}
  \bibinfo{person}{Richard~J. Lipton}.} \bibinfo{year}{1986}\natexlab{}.
\newblock \showarticletitle{Polynomial-time algorithm for the orbit problem}.
\newblock \bibinfo{journal}{\emph{J. {ACM}}} \bibinfo{volume}{33},
  \bibinfo{number}{4} (\bibinfo{year}{1986}), \bibinfo{pages}{808--821}.
\newblock


\bibitem[\protect\citeauthoryear{Kincaid, Cyphert, Breck, and Reps}{Kincaid
  et~al\mbox{.}}{2018}]%
        {KCBR18}
\bibfield{author}{\bibinfo{person}{Zachary Kincaid}, \bibinfo{person}{John
  Cyphert}, \bibinfo{person}{Jason Breck}, {and} \bibinfo{person}{Thomas~W.
  Reps}.} \bibinfo{year}{2018}\natexlab{}.
\newblock \showarticletitle{Non-linear reasoning for invariant synthesis}.
\newblock \bibinfo{journal}{\emph{{PACMPL}}} \bibinfo{volume}{2},
  \bibinfo{number}{{POPL}} (\bibinfo{year}{2018}),
  \bibinfo{pages}{54:1--54:33}.
\newblock


\bibitem[\protect\citeauthoryear{Macintyre and Wilkie}{Macintyre and
  Wilkie}{1996}]%
        {MacintyreWilkie1996}
\bibfield{author}{\bibinfo{person}{Angus Macintyre} {and}
  \bibinfo{person}{Alex~J. Wilkie}.} \bibinfo{year}{1996}\natexlab{}.
\newblock \showarticletitle{On the Decidability of the Real Exponential Field}.
\newblock In \bibinfo{booktitle}{\emph{Kreiseliana. About and Around Georg
  Kreisel}}, \bibfield{editor}{\bibinfo{person}{Piergiorgio Odifreddi}} (Ed.).
  \bibinfo{publisher}{A K Peters}, \bibinfo{pages}{441--467}.
\newblock


\bibitem[\protect\citeauthoryear{Masser}{Masser}{1988}]%
        {Mas88}
\bibfield{author}{\bibinfo{person}{David~W Masser}.}
  \bibinfo{year}{1988}\natexlab{}.
\newblock \showarticletitle{Linear relations on algebraic groups}.
\newblock \bibinfo{journal}{\emph{New Advances in Transcendence Theory}}
  (\bibinfo{year}{1988}), \bibinfo{pages}{248--262}.
\newblock


\bibitem[\protect\citeauthoryear{Mignotte, Shorey, and Tijdeman}{Mignotte
  et~al\mbox{.}}{1984}]%
        {MST84}
\bibfield{author}{\bibinfo{person}{M. Mignotte}, \bibinfo{person}{T. Shorey},
  {and} \bibinfo{person}{R. Tijdeman}.} \bibinfo{year}{1984}\natexlab{}.
\newblock \showarticletitle{The distance between terms of an algebraic
  recurrence sequence}.
\newblock \bibinfo{journal}{\emph{J. f\"ur die reine und angewandte Math.}}
  \bibinfo{volume}{349} (\bibinfo{year}{1984}).
\newblock


\bibitem[\protect\citeauthoryear{Ouaknine, Pinto, and Worrell}{Ouaknine
  et~al\mbox{.}}{2015}]%
        {OPW15}
\bibfield{author}{\bibinfo{person}{Jo{\"{e}}l Ouaknine},
  \bibinfo{person}{Jo{\~{a}}o~Sousa Pinto}, {and} \bibinfo{person}{James
  Worrell}.} \bibinfo{year}{2015}\natexlab{}.
\newblock \showarticletitle{On Termination of Integer Linear Loops}. In
  \bibinfo{booktitle}{\emph{Proceedings of the Twenty-Sixth Annual {ACM-SIAM}
  Symposium on Discrete Algorithms, {SODA} 2015, San Diego, CA, USA, January
  4-6, 2015}}. \bibinfo{pages}{957--969}.
\newblock


\bibitem[\protect\citeauthoryear{Ouaknine and Worrell}{Ouaknine and
  Worrell}{2014a}]%
        {OW14b}
\bibfield{author}{\bibinfo{person}{Jo{\"{e}}l Ouaknine} {and}
  \bibinfo{person}{James Worrell}.} \bibinfo{year}{2014}\natexlab{a}.
\newblock \showarticletitle{On the Positivity Problem for Simple Linear
  Recurrence Sequences,}. In \bibinfo{booktitle}{\emph{Automata, Languages, and
  Programming - 41st International Colloquium, {ICALP} 2014, Copenhagen,
  Denmark, July 8-11, 2014, Proceedings, Part {II}}}.
  \bibinfo{pages}{318--329}.
\newblock


\bibitem[\protect\citeauthoryear{Ouaknine and Worrell}{Ouaknine and
  Worrell}{2014b}]%
        {OW14a}
\bibfield{author}{\bibinfo{person}{Jo{\"{e}}l Ouaknine} {and}
  \bibinfo{person}{James Worrell}.} \bibinfo{year}{2014}\natexlab{b}.
\newblock \showarticletitle{Positivity Problems for Low-Order Linear Recurrence
  Sequences}. In \bibinfo{booktitle}{\emph{Proceedings of the Twenty-Fifth
  Annual {ACM-SIAM} Symposium on Discrete Algorithms, {SODA} 2014, Portland,
  Oregon, USA, January 5-7, 2014}}. \bibinfo{pages}{366--379}.
\newblock


\bibitem[\protect\citeauthoryear{Ouaknine and Worrell}{Ouaknine and
  Worrell}{2014c}]%
        {ouaknine2014ultimate}
\bibfield{author}{\bibinfo{person}{Jo{\"e}l Ouaknine} {and}
  \bibinfo{person}{James Worrell}.} \bibinfo{year}{2014}\natexlab{c}.
\newblock \showarticletitle{Ultimate Positivity is decidable for simple linear
  recurrence sequences}. In \bibinfo{booktitle}{\emph{Automata, Languages, and
  Programming - 41st International Colloquium, {ICALP} 2014, Copenhagen,
  Denmark, July 8-11, 2014, Proceedings, Part {II}}}. Springer,
  \bibinfo{pages}{330--341}.
\newblock


\bibitem[\protect\citeauthoryear{Ouaknine and Worrell}{Ouaknine and
  Worrell}{2015}]%
        {OW15}
\bibfield{author}{\bibinfo{person}{Jo{\"{e}}l Ouaknine} {and}
  \bibinfo{person}{James Worrell}.} \bibinfo{year}{2015}\natexlab{}.
\newblock \showarticletitle{On linear recurrence sequences and loop
  termination}.
\newblock \bibinfo{journal}{\emph{{SIGLOG} News}} \bibinfo{volume}{2},
  \bibinfo{number}{2} (\bibinfo{year}{2015}), \bibinfo{pages}{4--13}.
\newblock


\bibitem[\protect\citeauthoryear{Rodr{\'{\i}}guez{-}Carbonell and
  Kapur}{Rodr{\'{\i}}guez{-}Carbonell and Kapur}{2004}]%
        {RK04}
\bibfield{author}{\bibinfo{person}{Enric Rodr{\'{\i}}guez{-}Carbonell} {and}
  \bibinfo{person}{Deepak Kapur}.} \bibinfo{year}{2004}\natexlab{}.
\newblock \showarticletitle{An Abstract Interpretation Approach for Automatic
  Generation of Polynomial Invariants}. In \bibinfo{booktitle}{\emph{Static
  Analysis, 11th International Symposium, {SAS} 2004, Verona, Italy, August
  26-28, 2004, Proceedings}}. \bibinfo{pages}{280--295}.
\newblock


\bibitem[\protect\citeauthoryear{Rodr{\'{\i}}guez{-}Carbonell and
  Kapur}{Rodr{\'{\i}}guez{-}Carbonell and Kapur}{2007}]%
        {RK07}
\bibfield{author}{\bibinfo{person}{Enric Rodr{\'{\i}}guez{-}Carbonell} {and}
  \bibinfo{person}{Deepak Kapur}.} \bibinfo{year}{2007}\natexlab{}.
\newblock \showarticletitle{Generating all polynomial invariants in simple
  loops}.
\newblock \bibinfo{journal}{\emph{J. Symb. Comput.}} \bibinfo{volume}{42},
  \bibinfo{number}{4} (\bibinfo{year}{2007}), \bibinfo{pages}{443--476}.
\newblock


\bibitem[\protect\citeauthoryear{Sankaranarayanan, Sipma, and
  Manna}{Sankaranarayanan et~al\mbox{.}}{2004}]%
        {SSM04}
\bibfield{author}{\bibinfo{person}{Sriram Sankaranarayanan},
  \bibinfo{person}{Henny Sipma}, {and} \bibinfo{person}{Zohar Manna}.}
  \bibinfo{year}{2004}\natexlab{}.
\newblock \showarticletitle{Non-linear loop invariant generation using
  Gr{\"{o}}bner bases}. In \bibinfo{booktitle}{\emph{Proceedings of the 31st
  {ACM} {SIGPLAN-SIGACT} Symposium on Principles of Programming Languages,
  {POPL} 2004, Venice, Italy, January 14-16, 2004}}. \bibinfo{pages}{318--329}.
\newblock


\bibitem[\protect\citeauthoryear{Tao}{Tao}{2008}]%
        {Tao08}
\bibfield{author}{\bibinfo{person}{T. Tao}.} \bibinfo{year}{2008}\natexlab{}.
\newblock \bibinfo{booktitle}{\emph{Structure and randomness: pages from year
  one of a mathematical blog}}.
\newblock \bibinfo{publisher}{American Mathematical Society}.
\newblock


\bibitem[\protect\citeauthoryear{Tarski}{Tarski}{1951}]%
        {tarski1951decision}
\bibfield{author}{\bibinfo{person}{Alfred Tarski}.}
  \bibinfo{year}{1951}\natexlab{}.
\newblock \showarticletitle{A decision method for elementary algebra and
  geometry}.
\newblock \bibinfo{journal}{\emph{RAND Corporation, R-109}}
  (\bibinfo{year}{1951}).
\newblock


\bibitem[\protect\citeauthoryear{Tiwari}{Tiwari}{2004}]%
        {Tiw04}
\bibfield{author}{\bibinfo{person}{Ashish Tiwari}.}
  \bibinfo{year}{2004}\natexlab{}.
\newblock \showarticletitle{Termination of Linear Programs}. In
  \bibinfo{booktitle}{\emph{Computer Aided Verification, 16th International
  Conference, {CAV} 2004, Boston, MA, USA, July 13-17, 2004, Proceedings}}.
  \bibinfo{pages}{70--82}.
\newblock


\bibitem[\protect\citeauthoryear{Vereshchagin}{Vereshchagin}{1985}]%
        {Ver85}
\bibfield{author}{\bibinfo{person}{N.~K. Vereshchagin}.}
  \bibinfo{year}{1985}\natexlab{}.
\newblock \showarticletitle{The problem of appearance of a zero in a linear
  recurrence sequence (in {R}ussian)}.
\newblock \bibinfo{journal}{\emph{Mat. Zametki}} \bibinfo{volume}{38},
  \bibinfo{number}{2} (\bibinfo{year}{1985}), \bibinfo{pages}{609--615}.
\newblock


\bibitem[\protect\citeauthoryear{Wilkie}{Wilkie}{1996}]%
        {Wilkie96}
\bibfield{author}{\bibinfo{person}{A.~J. Wilkie}.}
  \bibinfo{year}{1996}\natexlab{}.
\newblock \showarticletitle{Model Completeness Results for Expansions of the
  Ordered Field of Real Numbers by Restricted Pfaffian Functions and the
  Exponential Function}.
\newblock \bibinfo{journal}{\emph{Journal of the American Mathematical
  Society}} \bibinfo{volume}{9}, \bibinfo{number}{4} (\bibinfo{year}{1996}),
  \bibinfo{pages}{1051--1094}.
\newblock


\bibitem[\protect\citeauthoryear{Xia and Zhang}{Xia and Zhang}{2010}]%
        {XZ10}
\bibfield{author}{\bibinfo{person}{Bican Xia} {and} \bibinfo{person}{Zhihai
  Zhang}.} \bibinfo{year}{2010}\natexlab{}.
\newblock \showarticletitle{Termination of linear programs with nonlinear
  constraints}.
\newblock \bibinfo{journal}{\emph{J. Symb. Comput.}} \bibinfo{volume}{45},
  \bibinfo{number}{11} (\bibinfo{year}{2010}), \bibinfo{pages}{1234--1249}.
\newblock


\end{thebibliography}

\end{document}